\newtheorem{theorem}{Theorem}
\newtheorem{lemma}{Lemma}
\newtheorem{corollary}{Corollary}
\newtheorem{proposition}{Proposition}
\theoremstyle{definition}
\newtheorem{definition}{Definition}
\theoremstyle{remark}
\newenvironment{subprop}{\begin{enumerate}[label=(\roman*), ref=(\roman*), noitemsep]}{\end{enumerate}}
\newcommand{\spforall}{\; \forall \;}
\newcommand{\spexists}{\; \exists \;}
\newcommand{\spst}{\;|\;}
\newcommand{\mcI}{\ensuremath{\mathcal{I}}}
\newcommand{\supscr}[1]{\ensuremath{^{\mathrm{#1}}}}
\newcommand{\subscr}[1]{\ensuremath{_{\mathrm{#1}}}}
\DeclareMathOperator{\pa}{pa}
\DeclareMathOperator{\nb}{ne}
\DeclareMathOperator{\doop}{do}
\DeclareMathOperator*{\argmin}{arg\,min}
\DeclareMathAlphabet{\mathsc}{OT1}{cmr}{m}{sc}
\newcommand{\LexBFS}{\ensuremath{\mathsc{LexBFS}}}
\newcommand{\Rand}{\ensuremath{\mathsc{Rand}}}
\newcommand{\RandAdv}{\ensuremath{\mathsc{RandAdv}}}
\newcommand{\OptSingle}{\ensuremath{\mathsc{OptSingle}}}
\newcommand{\OptUnb}{\ensuremath{\mathsc{OptUnb}}}
\newcommand{\MaxNb}{\ensuremath{\mathsc{MaxNb}}}
\definecolor{graphcol1}{rgb}{0.89, 0.10, 0.11}
\definecolor{graphcol2}{rgb}{0.22, 0.49, 0.72}
\definecolor{graphcol3}{rgb}{0.30, 0.69, 0.29}
\definecolor{graphcol4}{rgb}{0.60, 0.31, 0.64}
\newlength{\edgelength}
\newcommand{\grarright}{\mathbin{\tikz[baseline] \draw[->] (0pt, 0.7ex) -- (\edgelength, 0.7ex);}}
\newcommand{\grarleft}{\mathbin{\tikz[baseline] \draw[<-] (0pt, 0.7ex) -- (\edgelength, 0.7ex);}}
\newcommand{\grline}{\mathbin{\tikz[baseline] \draw[-] (0pt, 0.7ex) -- (\edgelength, 0.7ex);}}
\newcommand{\threegraph}[6]{%
  \begin{tikzpicture}[baseline=(one.base)]
    \node[anchor=base east] (one) at (0, 0) {#1};
    \node[anchor=base west] (two) at (1.2, 0) {#3};
    \node[anchor=base] (three) at (0.6, -0.5) {#5};
    \ifthenelse{\equal{#2}{}}{}{\draw[#2] (one.mid east) -- (two.mid west);}
    \ifthenelse{\equal{#4}{}}{}{\draw[#4] (two) -- (three);}
    \ifthenelse{\equal{#6}{}}{}{\draw[#6] (three) -- (one);}
  \end{tikzpicture}
}
\newlength{\exgredge}
\newenvironment{fournodeex}{%
\begin{tikzpicture}[baseline=(v1.base)]
  \node (v1) at (0, \exgredge) {$1$};
  \node (v2) at (-\exgredge, 0) {$2$};
  \node (v3) at (\exgredge, 0) {$3$};
  \node (v4) at (0, -\exgredge) {$4$};
}
{\end{tikzpicture}}
\newenvironment{optsingleex}{%
  \begin{tikzpicture}[baseline=(v1.base)]
    \node[anchor=mid] (v1) at (-3\exgredge, 0) {$1$};
    \node[anchor=mid] (v2) at (-2\exgredge, 0) {$2$};
    \node[anchor=mid] (v3) at (-1\exgredge, 0) {$3$};
    \node[anchor=mid] (v4) at (-1\exgredge, \exgredge) {$4$};
    \node[anchor=mid] (v5) at (0, 0) {$5$};
    \node[anchor=mid] (v6) at (-108:\exgredge) {$6$};
    \node[anchor=mid] (v7) at (-36:\exgredge) {$7$};
    \node[anchor=mid] (v8) at (36:\exgredge) {$8$};
    \node[anchor=mid] (v9) at (108:\exgredge) {$9$};
}
{\end{tikzpicture}}
\begin{document}

\title{Two Optimal Strategies for Active Learning of Causal Models From Interventional Data}
\author{Alain Hauser}
\affil{Department of Biology, Bioinformatics, University of Bern, CH-3012 Bern, Switzerland}
\author{Peter Bühlmann}
\affil{Seminar for Statistics, ETH Zürich, CH-8092 Zürich, Switzerland}

\maketitle

\begin{abstract}
From observational data alone, a causal DAG is only identifiable up to Markov equivalence.  Interventional data generally improves identifiability; however, the gain of an intervention strongly depends on the intervention target, that is, the intervened variables.  We present active learning (that is, optimal experimental design) strategies calculating optimal interventions for two different learning goals.  The first one is a greedy approach using single-vertex interventions that maximizes the number of edges that can be oriented after each intervention.  The second one yields in polynomial time a minimum set of targets of arbitrary size that guarantees full identifiability.  This second approach proves a conjecture of \citet{Eberhardt2008Almost} indicating the number of unbounded intervention targets which is sufficient and in the worst case necessary for full identifiability.  In a simulation study, we compare our two active learning approaches to random interventions and an existing approach, and analyze the influence of estimation errors on the overall performance of active learning.
\end{abstract}

\noindent\textbf{Keywords:} Active learning, graphical model, causal inference, Bayesian network, interventions

\section{Introduction}
\label{sec:introduction}

Causal relationships between random variables are usually modeled by directed acyclic graphs (DAGs), where an arrow between two random variables, $X \grarright Y$, reveals the former ($X$) as a \emph{direct} cause of the latter ($Y$).  From observational data alone (that is \emph{passively} observed data from the undisturbed system), directed graphical models are only identifiable up to Markov equivalence, and arrow directions (which are crucial for the causal interpretation) are in general not identifiable.  Without the assumption of specific functional model classes and error distributions \citep{Peters2011Identifiability}, the only way to improve identifiability is to use interventional data for estimation, that is data produced under a perturbation of the system in which one or several random variables are forced to specific values, irrespective of the original causal parents.  Examples of interventions include random assignment of treatments in a clinical trial, or gene knockdown or knockout experiments in systems biology.

The investigation of observational Markov equivalence classes has a long tradition in the literature \citep{Verma1990Equivalence,Andersson1997Characterization,Spirtes2000Causation}.  \citet{Hauser2012Characterization} extended the notion of Markov equivalence to the interventional case and presented a graph-theoretic characterization of corresponding Markov equivalence classes for a given set of interventions (possibly affecting several variables simultaneously).  Recently, we presented strategies for actively learning causal models with respect to sequentially improving identifiability \citep{Hauser2012Two}.  One of the strategies greedily optimizes the number of orientable edges with single-vertex interventions, and one that minimizes the number of interventions at arbitrarily many vertices to attain full identifiability.  This paper is an extended version of our previous work: besides a more detailed presentation of the algorithms, we evaluate their performance in the absence and presence of estimation errors and compare them to competing methods, and finally provide proofs for the correctness of the algorithms.

Several approaches for actively learning causal models have been proposed during the last decade, Bayesian as well as non-Bayesian ones, optimizing different utility functions.  All these active learning strategies consider sequential improvement of identifiability, which is different from the more classical active learning setting where one aims for sequential optimization of estimation accuracy \citep{Settles2012Active}.  In the non-Bayesian setting, \citet{Eberhardt2008Almost} and \citet{He2008Active} considered the problem of finding interventions that guarantee full identifiability of all representatives in a given (observational) Markov equivalence class which is assumed to be correctly learned.  The approach of \citet{Eberhardt2008Almost} works with intervention targets of unbounded size.  We prove the conjecture of \citet{Eberhardt2008Almost} on the number of intervention experiments sufficient and in the worst case necessary for fully identifying a causal model, and provide an algorithm that finds such a set of interventions in polynomial time (\OptUnb, see Section \ref{sec:unbounded-target}).  \citet{He2008Active} restrict the considerations to single-vertex interventions.  They propose an iterative line of action for learning causal models: their method estimates the observational Markov equivalence class in a first step and then incorporates interventional data to decide about edge orientations in subsequent steps.  This is not favorable from a statistical point of view since interventional data also yields information about parts of the graph that are not adjacent to the intervened variable.  We will see in Section \ref{sec:evaluation} that we indeed get smaller estimation errors in the finite sample case if we do not decouple the estimation of the observational Markov equivalence class and that of edge directions.  Moreover, the approach of \citet{He2008Active} is not able to cope with a situation in which we have few or no observational data, in contrast to ours.  \citet{Meganck2006Learning} compare different utility functions for single-vertex interventions, but do not address algorithmic questions of efficiently calculating optima of the utility functions.  In the Bayesian setting, \citet{Tong2001Active} and \citet{Masegosa2013Interactive} uses entropy-based utility functions.  While the approach of \citet{Tong2001Active} only interacts with the system under investigation, the approach of \citet{Masegosa2013Interactive} use (error-free) expert knowledge.

This paper is organized as follows: in Section \ref{sec:model}, we specify our notation of causal models and formalize our learning goals.  In Section \ref{sec:background}, we summarize graph-theoretic background material upon which our active learning algorithms, presented in Section \ref{sec:optimal-targets}, are based.  In Section \ref{sec:evaluation}, we evaluate our algorithms in a simulation study.  The proofs of the theoretical results of Section \ref{sec:optimal-targets} can be found in the appendix.

\section{Model}
\label{sec:model}

We consider a causal model on $p$ random variables $(X_1, \ldots, X_p)$ described by a DAG $D$.  Formally, a causal model is a pair $(D, f)$, where $D$ is a DAG on the vertex set $V = [p] := \{1, \ldots, p\}$ which encodes the \textbf{Markov property} of the (observational) density $f$: $f(x) = \prod_{i = 1}^p f(x_i \spst x_{\pa_D(i)})$; $\pa_D(i)$ denotes the parent set of vertex $i$.

Our notation and definitions related to graphs are summarized in Section \ref{sec:background}.  Unless stated otherwise, all graphs in this paper are assumed to have the vertex set $[p]$.

\subsection{Causal Calculus}
\label{sec:causal-calculus}

Beside the conditional independence relations of the observational density implied by the Markov property, a causal model also makes statements about effects of \textbf{interventions}.  We consider \textbf{stochastic interventions} \citep{Korb2004Varieties} modeling the effect of setting or forcing one or several random variables $X_I := (X_i)_{i \in I}$, where $I \subset [p]$ is called the \textbf{intervention target}, to the value of \emph{independent} random variables $U_I$.  Extending the $\doop()$ operator \citep{Pearl1995Causal} to stochastic interventions, we denote the \textbf{interventional density} of $X$ under such an intervention by
$$
    f(x | \doop_D(X_I\!=\!U_I))\!:=\!\prod_{i \notin I}\!f(x_i | x_{\pa_D(i)}) \prod_{i \in I}\!\tilde{f}(x_i),
$$
where $\tilde{f}$ is the density of $U_I$ on $\mathcal{X}_I$.  We also encompass the observational case as an intervention target by using $I = \emptyset$ and the convention $f(x | \doop(X_\emptyset = U_\emptyset)) = f(x)$.  The interventional density $f(x | \doop_D(X_I = U_I))$ has the Markov property of the \textbf{intervention graph} $D^{(I)}$, the DAG that we get from $D$ by removing all arrows pointing to vertices in $I$.  An illustration is given in Figure \ref{fig:ex-intervention-dags}.

\begin{figure}
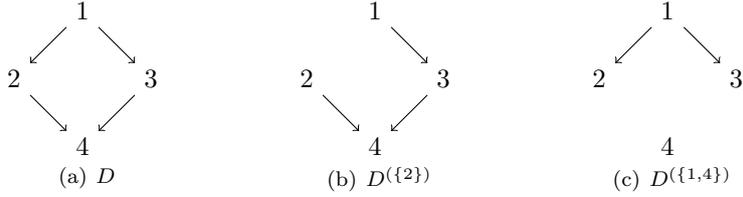

  \centering
  \subfigure[$D$]{%
    \begin{fournodeex}
      \draw[->] (v1) -- (v2);
      \draw[->] (v1) -- (v3);
      \draw[->] (v2) -- (v4);
      \draw[->] (v3) -- (v4);
    \end{fournodeex}
  } \qquad \qquad
  \subfigure[$D^{(\{2\})}$]{%
    \begin{fournodeex}
      \draw[->] (v1) -- (v3);
      \draw[->] (v2) -- (v4);
      \draw[->] (v3) -- (v4);
    \end{fournodeex}
  } \qquad \qquad
  \subfigure[$D^{(\{1, 4\})}$]{%
    \begin{fournodeex}
      \draw[->] (v1) -- (v2);
      \draw[->] (v1) -- (v3);
    \end{fournodeex}
  } \quad
  \caption{A DAG $D$ and the corresponding intervention graphs $D^{(\{2\})}$ and $D^{(\{1, 4\})}$.}
  \label{fig:ex-intervention-dags}
\end{figure}

We consider experiments based on data sets originating from \emph{multiple} interventions.  The \textbf{family of targets} $\mcI \subset \mathcal{P}([p])$, where $\mathcal{P}([p])$ denotes the power set of $[p]$, lists all (distinct) intervention targets used in an experiment.  A family of targets $\mcI = \{\emptyset, \{2\}, \{1, 4\}\}$ for example characterizes an experiment in which observational data as well as data originating from an intervention at $X_2$ and data originating from a (simultaneous) intervention at $X_1$ and $X_4$ are measured; that means an experiment in which data from all three DAGs in Figure \ref{fig:ex-intervention-dags} are collected.  In the whole paper, \mcI{} always stands for a family of targets with the property that for each vertex $a \in [p]$, there is some target $I \in \mcI$ in which $a$ is \emph{not} intervened ($a \notin I$).  This assumption, which is for example fulfilled when observation data is available ($\emptyset \in \mcI$), makes sure that the identifiability of the causal model under investigation is at least as good as in the observational case \citep{Hauser2012Characterization}.  Two DAGs $D_1$ and $D_2$ are called \textbf{\mcI-Markov equivalent} ($D_1 \sim_\mcI D_2$) if they are statistically indistinguishable under an experiment consisting of interventions at the targets in \mcI; we refer to \citet{Hauser2012Characterization} for a more formal treatment.
\begin{theorem}[\citet{Hauser2012Characterization}]
  \label{thm:markov-equivalence}
  Two DAGs $D_1$ and $D_2$ are \mcI-Markov equivalent if and only if
  \begin{subprop}
    \item $D_1$ and $D_2$ have the same skeleton and the same v-structures (that is, induced subgraphs of the form $a \grarright b \grarleft c$), and
    \item $D_1^{(I)}$ and $D_2^{(I)}$ have the same skeleton for all $I \in \mcI$.
  \end{subprop}
\end{theorem}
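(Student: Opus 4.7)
The plan is to reduce interventional Markov equivalence to observational Markov equivalence of each intervention graph and then translate the resulting conditions via the classical Verma--Pearl characterization. The factorization
\[
    f(x \mid \doop_D(X_I = U_I)) = \prod_{i \notin I} f(x_i \mid x_{\pa_D(i)}) \prod_{i \in I} \tilde f(x_i)
\]
splits into factors that are common to every candidate DAG (those for $i \in I$) and factors that constitute the Markov factorization of $D^{(I)}$. Consequently $D_1 \sim_\mcI D_2$ holds if and only if $D_1^{(I)}$ and $D_2^{(I)}$ are observationally Markov equivalent for every $I \in \mcI$, which by Verma--Pearl means that for every such $I$ they share skeleton and v-structures. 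The theorem thus reduces to showing the equivalence of this per-$I$ condition with the pair (i) and (ii).

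For the (easier) backward direction, (ii) is immediate. To obtain (i), I would fix a vertex $v$ and pick, by the running assumption on $\mcI$, a target $I_v \in \mcI$ with $v \notin I_v$; every arrow incident to $v$ in $D_1$ then survives in $D_1^{(I_v)}$, so by the per-$I$ skeleton match it also lies in $D_2^{(I_v)}$ and hence in $D_2$. The same choice transfers a v-structure $a \grarright v \grarleft c$ of $D_1$ verbatim to $D_1^{(I_v)}$, where the per-$I$ v-structure match carries it to $D_2^{(I_v)}$ and finally back to $D_2$, using the skeleton equality just established to rule out adjacency of $a$ and $c$ in $D_2$.

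The forward direction is the substantive half: I need to show that (i) and (ii) force every v-structure of $D_1^{(I)}$ to also be a v-structure of $D_2^{(I)}$ (the skeletons already match by (ii)). Let $a \grarright b \grarleft c$ be such a v-structure; then necessarily $b \notin I$ and the arrows $a \grarright b$, $c \grarright b$ are already present in $D_1$. If $a,c$ are non-adjacent in $D_1$, this is a genuine v-structure of $D_1$, hence of $D_2$ by (i), and it survives into $D_2^{(I)}$ unchanged. The main obstacle is the \emph{spurious} case, in which $a,c$ are adjacent in $D_1$ (WLOG $a \grarright c$, which forces $c \in I$ so that the edge vanishes in $D_1^{(I)}$): condition (i) says nothing about the orientation of the triangle on $\{a,b,c\}$ in $D_2$ because this triangle contains no v-structure at all. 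To overcome this I would apply (ii) edge by edge together with acyclicity of $D_2$: since $c \in I$ and $b \notin I$, only $c \grarright b$ (not $b \grarright c$) can survive in $D_2^{(I)}$, so (ii) forces $c \grarright b$ in $D_2$; the $a$--$c$ edge must vanish in $D_2^{(I)}$ while the $a$--$b$ edge must persist, and a short subcase split on whether $a \in I$, using the acyclicity of $D_2$ to rule out the cyclic alternatives, pins down $a \grarright b$ in $D_2$. This produces the desired v-structure $a \grarright b \grarleft c$ in $D_2^{(I)}$, and the symmetric subcase $c \grarright a$ in $D_1$ is handled identically.
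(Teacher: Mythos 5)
The paper does not actually prove this theorem; it is imported from \citet{Hauser2012Characterization}, so I am comparing your attempt with the proof given there. The purely graph-theoretic half of your argument --- the equivalence between ``(i) and (ii)'' and ``$D_1^{(I)}$ and $D_2^{(I)}$ are observationally Markov equivalent for every $I \in \mcI$'' --- is essentially sound. One slip: for a target $I_v \not\ni v$ it is \emph{not} true that every arrow incident to $v$ survives in $D_1^{(I_v)}$; an outgoing arrow $v \grarright w$ with $w \in I_v$ is deleted. You should argue edge by edge, choosing a target that avoids the \emph{head} of the arrow (this is exactly where the standing assumption on $\mcI$ enters). Your case analysis for the spurious v-structures of $D_1^{(I)}$ whose outer vertices are adjacent in $D_1$ --- using (ii) to force $c \grarright b$ in $D_2$, the required disappearance of the $a$--$c$ edge, and acyclicity of $D_2$ to pin down $a \grarright b$ --- is correct.

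The genuine gap is your very first step, the reduction of $\mcI$-Markov equivalence to per-target observational Markov equivalence of the intervention graphs. $\mcI$-Markov equivalence concerns \emph{tuples} $(f^{(I)})_{I \in \mcI}$ of interventional densities generated from a \emph{single} observational density that is Markov with respect to $D$: the conditionals $f(x_i \mid x_{\pa_D(i)})$ for $i \notin I$ must coincide across all targets. The ``only if'' direction of your reduction is fine (restrict attention to one target at a time), but the ``if'' direction is precisely where the probabilistic content of the theorem lies: knowing that each $f^{(I)}$ is separately Markov with respect to $D_2^{(I)}$ does not tell you that the conditionals $f^{(I)}(x_i \mid x_{\pa_{D_2}(i)})$ agree across the targets $I \not\ni i$, which is what membership in the $\mcI$-Markov model of $D_2$ requires; the observation that the factorization ``splits'' does not address this cross-target consistency at all. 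In \citet{Hauser2012Characterization} this is handled via Chickering's transformational characterization: two DAGs with the same skeleton and v-structures are connected by a sequence of covered edge reversals, condition (ii) guarantees that no reversed edge is cut by a target in $\mcI$, and an explicit computation with the conditional densities shows that a single such reversal preserves the $\mcI$-Markov property, including the consistency of conditionals across targets. Without that lemma, or some substitute for it, your argument establishes only the combinatorial equivalence of (i)--(ii) with the per-target graphical conditions, not the theorem itself.
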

In the case of observational data, represented by $\mcI = \{\emptyset\}$, the second condition is a trivial consequence of the first one; in this case, Theorem \ref{thm:markov-equivalence} corresponds to the theorem of \citet{Verma1990Equivalence} characterizing observational Markov equivalence.

An \mcI-Markov equivalence class of a DAG $D$ is uniquely represented by its \textbf{\mcI-essential graph} $\mathcal{E_I}(D)$ \citep{Hauser2012Characterization}.  This partially directed graph has the same skeleton as $D$; a directed edge in $\mathcal{E_I}(D)$ represents \textbf{\mcI-essential} arrows, that is arrows that have the same orientation in all DAGs of the equivalence class; an undirected edge represents arrows that have different orientations in different DAGs of the equivalence class (see Figure \ref{fig:ex-essential-graph} for an example).  The concept of \mcI-essential graphs generalizes the one of CPDAGs which is well-known in the observational case \citep{Spirtes2000Causation,Andersson1997Characterization}.  We denote the \mcI-Markov equivalence class corresponding to an \mcI-essential graph $G$ by $\mathbf{D}(G)$.

\begin{figure}
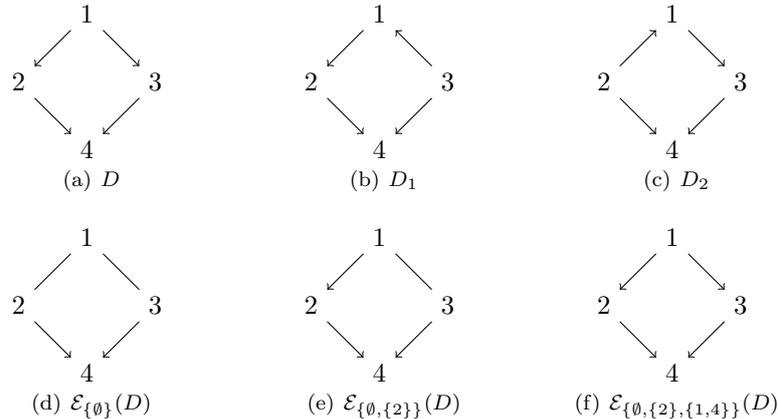

  \centering
  \subfigure[$D$]{%
    \qquad
    \begin{fournodeex}
      \draw[->] (v1) -- (v2);
      \draw[->] (v1) -- (v3);
      \draw[->] (v2) -- (v4);
      \draw[->] (v3) -- (v4);
    \end{fournodeex}
    \qquad
  } 
  \subfigure[$D_1$]{%
    \qquad
    \begin{fournodeex}
      \draw[->] (v1) -- (v2);
      \draw[->] (v2) -- (v4);
      \draw[->] (v3) -- (v1);
      \draw[->] (v3) -- (v4);
    \end{fournodeex}
    \qquad
  }
  \subfigure[$D_2$]{%
    \qquad
    \begin{fournodeex}
      \draw[->] (v1) -- (v3);
      \draw[->] (v2) -- (v1);
      \draw[->] (v2) -- (v4);
      \draw[->] (v3) -- (v4);
    \end{fournodeex}
    \qquad
  }
  
  \subfigure[$\mathcal{E}_{\{\emptyset\}}(D)$]{%
    \qquad
    \begin{fournodeex}
      \draw[-] (v1) -- (v2);
      \draw[-] (v1) -- (v3);
      \draw[->] (v2) -- (v4);
      \draw[->] (v3) -- (v4);
    \end{fournodeex}
    \qquad
  }
  \subfigure[$\mathcal{E}_{\{\emptyset, \{2\}\}}(D)$]{%
    \qquad
    \begin{fournodeex}
      \draw[->] (v1) -- (v2);
      \draw[-] (v1) -- (v3);
      \draw[->] (v2) -- (v4);
      \draw[->] (v3) -- (v4);
    \end{fournodeex}
    \qquad
  }
  \subfigure[$\mathcal{E}_{\{\emptyset, \{2\}, \{1, 4\}\}}(D)$]{%
    \qquad
    \begin{fournodeex}
      \draw[->] (v1) -- (v2);
      \draw[->] (v1) -- (v3);
      \draw[->] (v2) -- (v4);
      \draw[->] (v3) -- (v4);
    \end{fournodeex}
    \qquad
  }
  \caption{(a), (b), (c): the three DAGs of the observational Markov equivalence class of the DAG $D$ in Figure \ref{fig:ex-intervention-dags}.  (d), (e), (f): essential graphs of $D$ for different families of targets.  (d): with observational data alone ($\mcI = \{\emptyset\}$), the orientation of the upper two arrows are not identifiable.  (e): under the family of targets $\mcI = \{\emptyset, \{2\}\}$, $D_2$ is no longer \mcI-Markov equivalent to $D$, in contrast to $D_1$.  (f): under the family of targets $\mcI = \{\emptyset, \{2\}, \{1, 4\}\}$ finally, the DAG becomes completely identifiable.}
  \label{fig:ex-essential-graph}
\end{figure}

\subsection{Active Learning}
\label{sec:active-learning}

Assume $G$ is an \mcI-essential graph estimated from interventional data produced under the different interventions in \mcI.  We consider two different greedy active learning approaches.  In one step, the first one computes a single-vertex intervention that maximizes the number of orientable edges, while the second one computes an intervention target of arbitrary size that maximally reduces the clique number, that is the size of the largest clique of undirected edges (see Section \ref{sec:background}).  The motivation for the first approach is the attempt to quickly improve the identifiability of causal models with interventions at few variables; the motivation for the second approach is the conjecture of \citet{Eberhardt2008Almost} (which we prove in Corollary \ref{cor:intervention-number}) stating that maximally reducing the clique number after each intervention yields full identifiability of causal models with a minimal number of interventions.

Formally, our two algorithms yield a solution to the following problems.  The first one, called \OptSingle, computes a vertex
\begin{equation}
  v = \argmin_{v' \in [p]} \max_{D \in \mathbf{D}(G)} \xi\left(\mathcal{E}_{\mcI \cup \{\{v'\}\}}(D)\right) \ ,
  \label{eqn:objective-opt-single}
\end{equation}
where $\xi(H)$ denotes the number of undirected edges in a graph $H$.  The second algorithm, called \OptUnb, computes a set
\begin{equation}
  I = \argmin_{I' \subset [p]} \max_{D \in \mathbf{D}(G)} \omega\left(\mathcal{E}_{\mcI \cup \{I'\}}(D)\right) \ ,
  \label{eqn:objective-opt-unb}
\end{equation}
where $\omega(H)$ denotes the clique number of $H$ (see also Section \ref{sec:background}).  The key ingredients for the efficiency of \OptSingle{} (Algorithm \ref{alg:opt-single}) and \OptUnb{} (Algorithm \ref{alg:opt-unb}) are implementations that minimize the objective functions of Equation (\ref{eqn:objective-opt-single}) and (\ref{eqn:objective-opt-unb}), respectively, without enumerating all DAGs in the equivalence class represented by $G$.

We present the implementation of \OptSingle{} and \OptUnb{} in Section \ref{sec:optimal-targets}.  The next section provides the graph theoretic background upon which those implementations are based.

\section{Graph Theoretic Background}
\label{sec:background}

In this section, we give a brief summary of our notation (mostly following \citet{Andersson1997Characterization}) and basic facts concerning graphs.  We present classical results concerning graph coloring and perfect elimination orderings upon which our algorithms, especially \OptUnb{} (see Section \ref{sec:unbounded-target}), are based.

A \textbf{graph} is a pair $G = (V, E)$, where $V$ is a set of vertices and $E \subset (V \times V) \setminus \{(a, a) | a \in V\}$ is a set of edges.  We always assume $V = [p] := \{1, 2, \ldots, p\}$ and let the vertices of a graph represent the $p$ random variables $X_1, \ldots, X_p$.

An edge $(a, b) \in E$ with $(b, a) \in E$ is called \textbf{undirected} (or a line) and denoted by $a \grline b$, whereas an edge $(a, b) \in E$ with $(b, a) \notin E$ is called \textbf{directed} (or an arrow) and denoted by $a \grarright b$.  We use the short-hand notation $a \grarright b \in G$ to denote $(a, b) \in E$, but $(b, a) \notin E$, and $a \grline b \in G$ to denote $(a, b) \in E$ and $(b, a) \in E$.  $G$ is called directed (or undirected, respectively) if all its edges are directed (or undirected, respectively).  A \textbf{cycle} of length $k \geq 2$ is a sequence of $k$ distinct vertices of the form $(a_0, a_1, \ldots, a_k = a_0)$ such that $(a_{i-1}, a_i) \in E$ for $i \in \{1, \ldots, k\}$; the cycle is \textbf{directed} if at least one edge is directed.

For a subset $A \subset V$ of the vertices of $G$, the \textbf{induced subgraph} on $A$ is $G[A] := (A, E[A])$, where $E[A] := E \cap (A \times A)$.  A \textbf{v-structure} is an induced subgraph of $G$ of the form $a \grarright b \grarleft c$.  The \textbf{skeleton} of a graph $G$ is the undirected graph $G^u$ we get by replacing all directed edges in $G$ by an undirected one; formally, $G^u := (V, E^u)$, with $E^u := E \cup \{(a, b) \spst (b, a) \in E\}$.  The \textbf{parents} of a vertex $a \in V$ are the vertices $\pa_G(a) := \{b \in V \spst b \grarright a \in G\}$, its \textbf{neighbors} are the vertices $\nb_G(a) := \{b \in V \spst a \grline b \in G\}$; the \textbf{degree} of $a$ is defined as $\deg(a) := |\{b \in V \spst (a, b) \in E \vee (b, a) \in E\}|$, the maximum degree of $G$ is $\Delta(G) := \max_{a \in V} \deg(a)$.  The notation $|\cdot|$ here stands for set cardinality.

An undirected graph $G = (V, E)$ is \textbf{complete} if all pairs of vertices are neighbors.  A \textbf{clique} is a subset of vertices $C \subset V$ such that $G[C]$ is complete.  The \textbf{clique number} $\omega(G)$ of $G$ is the size of the largest clique in $G$.  $G$ is \textbf{chordal} if every cycle of length $k \geq 4$ contains a \textbf{chord}, that is two nonconsecutive adjacent vertices.

A \textbf{directed acyclic graph} or \textbf{DAG} is a directed graph without cycles.  A partially directed graph $G = (V, E)$ is a \textbf{chain graph} if it contains no \emph{directed} cycle; undirected graphs and DAGs are special cases of chain graphs.  Let $G'$ be the undirected graph we get by removing all directed edges from a chain graph $G$.  The \textbf{chain component} $T_G(a)$ of a vertex $a$ is the set of all vertices that are connected to $a$ in $G'$.  The set of all chain components of $G$ is denoted by $\mathbf{T}(G)$; they form a partition of $V$.  We extend the clique number to chain graphs $G$ by the definition $\omega(G) := \max_{T \in \mathbf{T}(G)} \omega(G[T])$.  To simplify notation, we will also use the term ``chain component'' to denote the subgraph $G[T]$ induced by an actual chain component $T$ when it is clear from the context whether the vertex set or the induced subgraph is meant.

An \textbf{ordering} of a graph $G$, that is a permutation $\sigma: [p] \to [p]$, induces a total order on $V$ by the definition $a \leq_\sigma b :\Leftrightarrow \sigma^{-1}(a) \leq \sigma^{-1}(b)$.  An ordering $\sigma = (v_1, \ldots, v_p)$ is a \textbf{perfect elimination ordering} if for all $i$, $\nb_{G^u} \cap \{v_1, \ldots, v_{i-1}\}$ is a clique in $G^u$.  A \textbf{topological ordering} of a DAG $D$ is an ordering $\sigma$ such that $a \leq_\sigma b$ for each arrow $a \grarright b \in D$; we then say that $D$ is \textbf{oriented according to $\sigma$}.  In our context, perfect elimination orderings play an important role for two reasons:
\begin{subprop}
  \item An \mcI-essential graph is a chain graph with chordal chain components \citep[Proposition 15]{Hauser2012Characterization}.  In order to construct a representative, its chain components have to be oriented according to a perfect elimination ordering \citep[Proposition 16]{Hauser2012Characterization}; this ensures that the representative has no v-structures which are not present in the \mcI-essential graph (see also Proposition \ref{prop:lex-bfs-peo}).
  
  \item As explained at the end of this section, greedy coloring along a perfect elimination ordering of a chordal graph yields an optimal coloring (Corollary \ref{cor:chordal-graph-perfect-coloring}).  This property is exploited by the algorithm \OptUnb{} (Algorithm \ref{alg:opt-unb}).
\end{subprop}
Algorithmically, we can generate a perfect elimination ordering on a chordal graph in linear time using a variant of the breadth-first search (BFS) called \textbf{lexicographic BFS} or \LexBFS{} \citep{Rose1970Triangulated}.  Let $G = (V, E)$ be an undirected graph for the rest of this section.  \LexBFS{} takes an ordering $(v_1, \ldots, v_p)$ of $V$ and the edge set $E$ as input and that outputs an ordering $\sigma = \LexBFS((v_1, \ldots, v_p), E)$ listing the vertices of $V$ in the visited order.  When the first $k$ vertices of the input ordering $(v_1, \ldots, v_p)$ form a clique in $G$, the output ordering $\sigma$ also starts with $v_1, \ldots, v_k$; we refer to \citet{Hauser2012Characterization} for details of such an implementation.  It is often sufficient to specify the input ordering of \LexBFS{} up to arbitrary orderings of some subsets of vertices.  For a set $A = \{a_1, \ldots, a_k\} \subset V$ and an additional vertex $v \in V \setminus A$, for example, we use the notation $\LexBFS((A, v, \ldots), E)$ to denote a \LexBFS-ordering produced from an input ordering of the form $(a_1, \ldots, a_k, v, \ldots)$, without specifying the orderings of $A$ and $V \setminus (A \cup \{v\})$.

\begin{proposition}[\citet{Rose1976Algorithmic}]
  \label{prop:lex-bfs-peo}
  Let $G = (V, E)$ be an undirected chordal graph with a \LexBFS-ordering $\sigma$.  Then $\sigma$ is also a perfect elimination ordering on $G$.  By orienting the edges of $G$ according to $\sigma$, we get a DAG without v-structures.
\end{proposition}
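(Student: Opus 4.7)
The plan is to prove the two claims of the proposition in sequence: first that $\sigma$ is a perfect elimination ordering (PEO), and then, as an immediate corollary, that orienting $G$ along $\sigma$ yields a DAG with no v-structures.

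For the PEO claim, the main tool is the classical four-point property of LexBFS orderings: whenever three vertices satisfy $a <_\sigma b <_\sigma c$, $\{a, c\} \in E$, and $\{a, b\} \notin E$, there exists some $d <_\sigma b$ with $\{d, b\} \in E$ and $\{d, c\} \notin E$. I would establish this by bookkeeping on the lexicographic labels: right after $a$ is processed, the label of $c$ gains $a$'s step index while the label of $b$ does not, so $c$ lex-dominates $b$ at that moment; for $b$ to be chosen before $c$ in a subsequent step, some vertex processed in between must add to $b$'s label without adding to $c$'s, and such a vertex is precisely a $d$ of the required form. Chordality of $G$ is not needed for this step.

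To conclude that $\sigma$ itself is a PEO, I would argue by contradiction. Assume the PEO condition fails and pick a minimal obstruction: indices $j < k < i$ with $v_j, v_k \in \nb_G(v_i)$ but $\{v_j, v_k\} \notin E$. Applying the four-point property to the triple $(v_j, v_k, v_i)$ produces a vertex $v_l <_\sigma v_k$ with $\{v_l, v_k\} \in E$ and $\{v_l, v_i\} \notin E$. If $\{v_l, v_j\} \in E$, then the four distinct vertices $v_j, v_i, v_k, v_l$ form an induced 4-cycle whose only two candidate chords $\{v_j, v_k\}$ and $\{v_l, v_i\}$ are both absent, contradicting chordality of $G$. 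If $\{v_l, v_j\} \notin E$, one iterates the four-point property on this enlarged configuration and uses chordality to eventually force a forbidden induced chordless cycle. Engineering this iteration to terminate cleanly — for instance by choosing the initial obstruction to minimize a well-founded measure such as the pair $(i, k - j)$, so that any further application of the four-point property yields a shorter chordless obstruction and hence contradicts minimality — is the main technical subtlety of the proof.

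The v-structure-free claim is then immediate. Orient each edge $v_j \grline v_i \in G$ with $j < i$ as the arrow $v_j \grarright v_i$; acyclicity follows because $\sigma$ is a topological order of the resulting orientation. A v-structure $v_j \grarright v_i \grarleft v_k$ would require two distinct parents $v_j, v_k$ of $v_i$ with $j, k < i$ and $\{v_j, v_k\} \notin E$, which is precisely the configuration that the PEO property established in the previous step rules out.
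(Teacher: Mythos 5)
First, a point of reference: the paper does not prove this proposition at all — it is imported from \citet{Rose1976Algorithmic} as a known classical fact — so there is no in-paper argument to compare against. Your overall strategy (a ``four-point'' exchange property of \LexBFS{} orderings combined with chordality via a minimal counterexample) is indeed the standard route to this result, and your base case (the chordless four-cycle $v_j \grline v_i \grline v_k \grline v_l \grline v_j$ when $\{v_l, v_j\} \in E$) and your final paragraph (acyclicity plus the PEO property excluding v-structures) are correct.

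The gap sits exactly where you flag the ``main technical subtlety,'' and it originates in your statement and justification of the four-point property. Labels are compared lexicographically with \emph{earlier}-visited vertices occupying the most significant positions. So once $c$'s label has gained $a$ while $b$'s has not, no vertex processed \emph{after} $a$ can restore $b$'s priority over $c$: it would contribute a strictly less significant entry than $a$'s. Your proposed mechanism (``some vertex processed in between must add to $b$'s label without adding to $c$'s'') therefore cannot occur. The correct conclusion is that $b$ must already have held an advantage at a position more significant than $a$'s, i.e.\ there exists $d <_\sigma a$ — not merely $d <_\sigma b$ — with $\{d,b\} \in E$ and $\{d,c\} \notin E$. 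This strengthening is not cosmetic: it is what makes the iteration terminate, since each application produces a vertex strictly $\sigma$-earlier than all previously constructed ones and $V$ is finite. With only $d <_\sigma b$, the new vertex may interleave with the old ones, and your measure $(i, k-j)$ does not visibly decrease: the configuration produced after one step (the triple $(v_l, v_j, v_k)$ with $\{v_l, v_k\} \in E$ and $\{v_l, v_j\} \notin E$) is not a PEO violation of the original shape, so minimality of the original obstruction says nothing about it. Finally, even with the corrected property, ``iterate and eventually force a chordless cycle'' must actually be carried out: the standard argument maintains an \emph{induced path} extended by one strictly earlier vertex per step, and each extension requires checking the new vertex's adjacencies to \emph{all} previous path vertices, with some sub-cases yielding a chordless cycle of length at least four and others extending the path. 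As written, the proof defers rather than closes this step.
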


Algorithm \ref{alg:opt-unb} is strongly based on graph colorings.  A \textbf{$k$-coloring} of an undirected graph $G = (V, E)$ is a map $c: V \to [k]$; the coloring $c$ is \textbf{proper} if $c(u) \ne c(v)$ for every edge $u \grline v \in G$.  We say that $G$ is \textbf{$k$-colorable} if it admits a proper $k$-coloring; the \textbf{chromatic number} $\chi(G)$ of $G$ is the smallest integer $k$ such that $G$ is $k$-colorable.  By greedily coloring the vertices of the graph (see Algorithm \ref{alg:greedy-coloring}), one gets a proper $k$-coloring with $k \leq \Delta(G) + 1$ in polynomial time \citep{Chvatal1984Perfectly}.

\begin{algorithm}[t]
  \caption{$\mathsc{GreedyColoring}(G, \sigma)$.  Greedy algorithm that yields a proper coloring of $G$ along an ordering $\sigma$.}
  \label{alg:greedy-coloring}
  \SetKwInOut{Input}{Input}
  \SetKwInOut{Output}{Output}
  \Input{$G = ([p], E)$: undirected graph; $\sigma = (v_1, \ldots, v_p)$: ordering of $G$.}
  \Output{A proper coloring $c: [p] \to \mathbb{N}$}
  $c(v_1) \leftarrow 1$\;
  \For{$i = 2$ to $p$}{%
    $c(v_i) \leftarrow \min\{k \in \mathbb{N} \spst k \ne c(u) \spforall u \in \{v_1, \ldots, v_{i-1}\} \cap \nb(v_i)\}$\;
  }
  \Return{$c$}\;
\end{algorithm}

For any undirected graph $G$, the bounds $\omega(G) \leq \chi(G) \leq \Delta(G) + 1$ hold.  $G$ is \textbf{perfect} if $\omega(H) = \chi(H)$ holds for every induced subgraph $H$ of $G$.  An ordering $\sigma$ of $G$ is \textbf{perfect} if for any induced subgraph $H$ of $G$, greedy coloring along the ordering induced by $\sigma$ yields an optimal coloring of $H$ (that is, a $\chi(H)$-coloring).
\begin{proposition}[\citet{Chvatal1984Perfectly}]
  \label{prop:perfect-ordering}
  An ordering $\sigma$ of an undirected graph $G$ is perfect if and only if $G$ has no induced subgraph of the form $a \grline b \grline c \grline d$ with $a <_\sigma b$ and $d <_\sigma c$.
\end{proposition}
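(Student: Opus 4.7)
The plan is to prove the two directions of the biconditional separately.

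\emph{Necessity.} I would argue contrapositively: if $G$ contains an induced $P_4$ of the form $a \grline b \grline c \grline d$ with $a <_\sigma b$ and $d <_\sigma c$, then $\sigma$ is not perfect. Restrict to the four-vertex induced subgraph $H = G[\{a,b,c,d\}]$, which is itself a $P_4$ with $\chi(H) = 2$. There are exactly six orderings of $\{a,b,c,d\}$ consistent with the two constraints $a <_\sigma b$ and $d <_\sigma c$; in each, a direct walk through greedy coloring shows that the nonadjacent endpoints $a$ and $d$ (which precede their respective middle neighbors) both receive color $1$, the first of $\{b,c\}$ processed then receives color $2$, and the remaining middle vertex, being adjacent to an already-colored vertex of each earlier color, is forced to color $3$. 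Hence greedy uses three colors on $H$ while $\chi(H) = 2$, so $\sigma$ is not perfect.

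\emph{Sufficiency.} Assume $\sigma$ avoids the forbidden pattern. I would proceed by strong induction on $|V(H)|$ to show that for every induced subgraph $H$, greedy along $\sigma|_H$ uses $\chi(H)$ colors; equivalently, that if greedy uses $k$ colors on $H$, then $\omega(H) \geq k$. Let $v^*$ denote the $\sigma$-last vertex of $H$. If greedy assigns $v^*$ a color strictly less than $k$, then greedy on $H \setminus v^*$ already uses $k$ colors, so the inductive hypothesis yields $\chi(H \setminus v^*) = k$ and hence $\chi(H) \geq k$. If greedy assigns $v^*$ color exactly $k$, then by greediness $v^*$ has a preceding neighbor of every color $1, 2, \ldots, k-1$. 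I would then construct a clique $v^* = v_k, v_{k-1}, \ldots, v_1$ with $c(v_j) = j$ working downward from $v_k$: at stage $j$, I take a color-$j$ preceding neighbor $u$ of $v_{j+1}$; if $u$ is adjacent to all of $v_{j+2}, \ldots, v_k$ I set $v_j := u$, otherwise, letting $m$ be the smallest index $> j+1$ with $u$ nonadjacent to $v_m$, I pick a color-$j$ preceding neighbor $u'$ of $v_m$ (necessarily distinct from and nonadjacent to $u$, since they share a color) and obtain the induced $P_4$ $u \grline v_{m-1} \grline v_m \grline u'$ with $u <_\sigma v_{m-1}$ and $u' <_\sigma v_m$, contradicting the no-obstruction hypothesis.

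\emph{Main obstacle.} The trickiest step is completing the clique-extension argument in the sufficiency part. The quadruple $\{u, v_{m-1}, v_m, u'\}$ only induces the desired $P_4$ when $v_{m-1}$ and $u'$ are nonadjacent; if they happen to be adjacent, then $u'$ becomes a stronger candidate for $v_j$ (it already witnesses edges to both $v_{m-1}$ and $v_m$), and one must iterate the procedure, potentially replacing $u$ with $u'$ and re-examining where the adjacency chain first fails. The delicate point is to select the color-$j$ candidate at each stage in a way that guarantees termination---for example, $\sigma$-minimally among those color-$j$ vertices adjacent to the largest prefix of the existing partial clique---so that the iteration either succeeds in producing $v_j$ or eventually yields an induced $P_4$ with the orientation forbidden by the hypothesis. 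I would also need to verify that the secondary backward induction on clique size nests cleanly inside the primary induction on $|V(H)|$, and that the $P_4$ it discovers lies within $H$ (hence within $G$).
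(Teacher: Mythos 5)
The paper does not prove this proposition---it is imported verbatim from \citet{Chvatal1984Perfectly}---so there is no in-paper argument to compare against; I can only assess your attempt on its own terms. Your necessity direction is correct and complete: in each of the six orderings of $\{a,b,c,d\}$ compatible with $a <_\sigma b$ and $d <_\sigma c$, the two endpoints receive color $1$, the first middle vertex processed receives color $2$, and the second is forced to color $3$, while $\chi(P_4)=2$. Your reduction in the sufficiency direction (peel off the $\sigma$-last vertex $v^*$ of $H$ unless $c(v^*)=k$, in which case exhibit a $k$-clique among $v^*$ and its predecessors) is also the right skeleton, and you correctly observe that obstruction-freeness is hereditary under taking induced subgraphs with the induced order.

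However, the clique-extension step is a genuine gap, and it sits exactly where the content of Chv\'atal's theorem lives. When the color-$j$ candidate $u$ fails at index $m$, the quadruple $u \grline v_{m-1} \grline v_m \grline u'$ is an induced $P_4$ with the forbidden orientation only if $u' \not\sim v_{m-1}$; you name this case but do not resolve it. Your proposed repair---replace $u$ by $u'$ and iterate, choosing candidates $\sigma$-minimally among color-$j$ vertices adjacent to the longest prefix of the partial clique---is left unverified on the two points that actually matter. First, termination: $u'$ being adjacent to $v_{m-1}$ and $v_m$ does not make it adjacent to $v_{j+1},\dots,v_{m-2}$, so the ``largest adjacent prefix'' need not increase monotonically under your replacement rule, and nothing in the sketch rules out cycling among candidates. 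Second, the order invariant: $u'$ is only guaranteed to precede $v_m$, not $v_{j+1}$, so after a replacement the chain $v_1 <_\sigma \cdots <_\sigma v_k$ and the inequality $u <_\sigma v_{m-1}$ (which you used to orient the discovered $P_4$ as an obstruction) may no longer hold at subsequent stages. Until the extremal choice is pinned down and these two properties are actually established, the sufficiency direction---the nontrivial half of the theorem---remains unproved.
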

It can easily be seen that a perfect elimination ordering fulfills the requirement of Proposition \ref{prop:perfect-ordering}; hence we get, together with Proposition \ref{prop:lex-bfs-peo}, the following corollary.
\begin{corollary}
  \label{cor:chordal-graph-perfect-coloring}
  \begin{subprop}
    \item \label{itm:perfect-elimination-ordering-perfect} A perfect elimination ordering on some graph $G$ is perfect.
    \item \label{itm:chordal-graph-perfectly-orderable} Any chordal graph has a perfect ordering.
  \end{subprop}
\end{corollary}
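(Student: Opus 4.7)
The plan is to deduce the corollary directly from Proposition \ref{prop:perfect-ordering}, which says that a perfect ordering is exactly one that induces no ``bad'' path of length three; the work is therefore in part~\ref{itm:perfect-elimination-ordering-perfect}, and part~\ref{itm:chordal-graph-perfectly-orderable} will follow immediately from \ref{itm:perfect-elimination-ordering-perfect} together with Proposition \ref{prop:lex-bfs-peo}.

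For \ref{itm:perfect-elimination-ordering-perfect} I would proceed by contradiction. Let $\sigma$ be a perfect elimination ordering of $G$, and suppose there is an induced subgraph $a \grline b \grline c \grline d$ with $a <_\sigma b$ and $d <_\sigma c$. Being induced, this path has no chord, so $a \not\grline c$, $a \not\grline d$ and $b \not\grline d$ in $G$. The key step is to split on the $\sigma$-position of $b$ relative to $c$ and in each case exhibit two non-adjacent vertices that the PEO property forces to be adjacent. If $b <_\sigma c$, then both $b$ and $d$ are neighbors of $c$ strictly preceding $c$ in $\sigma$, so by the PEO property $\{b,d\}$ must be a clique; but $b \not\grline d$. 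If instead $c <_\sigma b$, then both $a$ and $c$ are neighbors of $b$ strictly preceding $b$ in $\sigma$, so $\{a,c\}$ must be a clique; but $a \not\grline c$. Either way we reach a contradiction, so $\sigma$ satisfies the condition of Proposition \ref{prop:perfect-ordering} and is therefore perfect.

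For \ref{itm:chordal-graph-perfectly-orderable}, let $G$ be any chordal graph and let $\sigma$ be a \LexBFS-ordering of $G$ (computable, e.g., from the identity input ordering). By Proposition \ref{prop:lex-bfs-peo}, $\sigma$ is a perfect elimination ordering of $G$, and by \ref{itm:perfect-elimination-ordering-perfect} just established, it is a perfect ordering.

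The only subtle point is the case analysis in \ref{itm:perfect-elimination-ordering-perfect}; once one realizes that the inequalities $a <_\sigma b$ and $d <_\sigma c$ from Proposition \ref{prop:perfect-ordering} are exactly what is needed to pair $\{b,d\}$ (resp.\ $\{a,c\}$) as earlier neighbors of a common vertex, everything falls into place. I do not foresee any further obstacle.
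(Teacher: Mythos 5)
Your proof is correct and follows the same route the paper indicates: it verifies that a perfect elimination ordering meets the forbidden-configuration criterion of Proposition \ref{prop:perfect-ordering} (the step the paper dismisses as "easily seen", which your two-case analysis on the relative positions of $b$ and $c$ carries out correctly), and then combines this with Proposition \ref{prop:lex-bfs-peo} for part (ii).
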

\begin{proposition}[\citet{Chvatal1984Perfectly}]
  \label{prop:perfectly-orderable-graphs-perfect}
  A graph with a perfect ordering is perfect.
\end{proposition}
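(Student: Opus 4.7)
The plan is to reduce the claim to showing $\chi(G) = \omega(G)$ whenever $G$ admits a perfect ordering $\sigma$. This reduction is immediate: for any induced subgraph $H$ of $G$, the restriction $\sigma|_H$ is again a perfect ordering, since by Proposition \ref{prop:perfect-ordering} perfection is characterized by the absence of induced paths $a \grline b \grline c \grline d$ with $a <_\sigma b$ and $d <_\sigma c$, and passing to a subset of vertices cannot create such obstructions. Because greedy coloring along $\sigma$ already produces a proper $\chi(G)$-coloring by definition of a perfect ordering, and $\omega(G) \leq \chi(G)$ always holds, the whole claim reduces to showing that greedy uses at most $\omega(G)$ colors.

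I would establish this by proving, by induction on $k$, the stronger statement: if the greedy coloring $c$ assigns color $k$ to a vertex $v$, then $G$ contains a $k$-clique containing $v$ whose vertices carry exactly the colors $1, 2, \ldots, k$. The base case $k = 1$ is trivial. For $k \geq 2$, the greedy rule implies that $v$ has earlier neighbors of every color in $\{1, \ldots, k-1\}$; fix one such neighbor $u$ with $c(u) = k-1$ and apply the induction hypothesis to obtain a $(k-1)$-clique $K$ containing $u$ with vertex colors $\{1, \ldots, k-1\}$. If $v$ is adjacent to every vertex of $K$, then $K \cup \{v\}$ is the desired $k$-clique.

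The hard case, and the main obstacle, is when some $w \in K$ is non-adjacent to $v$. Writing $j := c(w) \in \{1, \ldots, k-2\}$, the greedy rule at $v$ supplies an earlier neighbor $z$ of $v$ with $c(z) = j$; since $w$ and $z$ share a color they are non-adjacent, and $z \neq w$. Combined with $w \grline u \in G$ (from the clique $K$) and $u \grline v \in G$ (by choice of $u$), this yields a candidate induced path $w \grline u \grline v \grline z$. If $u$ and $z$ were also non-adjacent, this would be an induced $P_4$ with $w <_\sigma u$ and $z <_\sigma v$, contradicting Proposition \ref{prop:perfect-ordering}; hence $u$ must be adjacent to $z$, and $z$ is a candidate replacement for $w$ adjacent to both $u$ and $v$. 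What remains, and where the bookkeeping gets delicate, is to ensure that $z$ can also be made adjacent to every other vertex of $K$, so that $w$ is cleanly exchanged for $z$. My plan to close this gap is to strengthen the inductive hypothesis to demand that the $(k-1)$-clique supplied by the IH be adjacent to a prescribed external vertex (here $v$), or, equivalently, to iterate the above forbidden-$P_4$ exchange, picking $w$ to be the non-neighbor of $v$ in $K$ of largest color so that each exchange strictly decreases an ordinal progress measure and the process terminates in a clique of the required form.
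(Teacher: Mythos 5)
The paper offers no proof of this proposition---it is quoted from \citet{Chvatal1984Perfectly}---so your argument has to stand on its own. Your reduction is sound: restricting $\sigma$ to an induced subgraph preserves the absence of the obstructions of Proposition \ref{prop:perfect-ordering}, and since greedy colouring along a perfect ordering is optimal by definition, it suffices to show that greedy along an obstruction-free ordering uses at most $\omega(G)$ colours. The clique-growing induction is also the right general idea. But the inductive step has a genuine gap, and neither of your proposed repairs closes it.

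A first, smaller problem: to read $w \grline u \grline v \grline z$ as an obstruction you need \emph{both} $z <_\sigma v$ and $w <_\sigma u$. Your induction hypothesis only supplies a $(k-1)$-clique $K \ni u$ with colours $\{1, \ldots, k-1\}$ and says nothing about where the vertices of $K$ sit relative to $u$, so $w <_\sigma u$ is unjustified (this is repairable by demanding that the clique consist of $\sigma$-earlier neighbours). The serious problem is the exchange itself: the forbidden-$P_4$ argument yields only $z \grline u$ and $z \grline v$, with no control over the adjacency of $z$ to the remaining vertices of $K$. Hence $(K \setminus \{w\}) \cup \{z\}$ need not be a clique, the invariant is lost after a single exchange, and your progress measure is then tracking a set that is no longer of the required form. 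The alternative repair---an induction hypothesis asserting that the $(k-1)$-clique can be chosen adjacent to a prescribed external vertex---is not well formed: without hypotheses on that external vertex the statement is simply false (take a vertex adjacent to $u$ alone), and the hypotheses you would have to impose amount to the claim you are trying to prove. Chv\'atal's argument needs a more careful device (an extremal choice of witnesses combined with obstruction-freeness, or an induction over the vertex set) precisely at this point. As written, your proof handles the case where all of $K$ happens to be adjacent to $v$ and correctly identifies, but does not resolve, the remaining case.
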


\section{Optimal Intervention Targets}
\label{sec:optimal-targets}

An \mcI-essential graph is a chain graph with chordal chain components.  Their edges are oriented according to a perfect elimination ordering in the DAGs of the corresponding equivalence class; edge orientations of different chain components do not influence (that means, additionally restrict) each other (see \citet{Hauser2012Characterization}, and Theorem \ref{thm:essential-graph-characterization}, \ref{sec:proofs}).  We can thus restrict our search for optimal intervention targets to single chain components. We can even treat each chain component as an observational essential graph, as the following lemma shows.  Its proof, as well as the proofs of all other statements of this section, are postponed to \ref{sec:proofs}.
\begin{lemma}
  \label{lem:intervention-local-effect}
  Consider an \mcI-essential graph $\mathcal{E_I}(D)$ of some DAG $D$, and let $T \in \mathbf{T}(\mathcal{E_I}(D))$ be one of its chain components.  Furthermore, let $I \subset [p]$, $I \notin \mcI$, be an (additional) intervention target.  Then we have
  $$
    \mathcal{E}_{\mcI \cup \{I\}}(D)[T] = \mathcal{E}_{\{\emptyset, I \cap T\}}(D[T]) \ .
  $$
\end{lemma}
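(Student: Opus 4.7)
Both sides of the claimed equality are partially directed graphs on the vertex set $T$ whose skeletons agree with that of $D[T]$, which in turn equals the undirected chordal graph $\mathcal{E_I}(D)[T]$. So the content of the lemma is that their sets of directed edges coincide. My plan is to identify directed edges on both sides as those oriented consistently by all DAGs in the corresponding equivalence class, and then show that the two equivalence classes, when restricted to $T$, are the same.

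The first step is to describe $\{D'[T] \spst D' \sim_\mcI D\}$. Chain components of an \mcI-essential graph can be reoriented independently without introducing new v-structures (a structural property of \mcI-essential graphs, see \citet{Hauser2012Characterization}). Hence this set coincides with the collection of DAGs on $T$ that share the skeleton of $\mathcal{E_I}(D)[T]$ and have no v-structures. Any v-structure of $D$ fully contained in $T$ would render its two edges $\{\emptyset\}$-essential and thus directed in $\mathcal{E_I}(D)$, contradicting their being inside a chain component; so $D[T]$ has no v-structures, and the collection above is precisely the observational equivalence class $\{D'' \spst D'' \sim_{\{\emptyset\}} D[T]\}$.

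The second step adds the constraint contributed by the intervention $I$. By Theorem \ref{thm:markov-equivalence}, $D' \sim_{\mcI \cup \{I\}} D$ amounts to $D' \sim_\mcI D$ together with $D'^{(I)}$ and $D^{(I)}$ having the same skeleton. An edge $a \grline b$ in the common skeleton is absent in $D^{(I)}$ iff it is directed toward a vertex in $I$; hence the skeleton-matching condition becomes that $D$ and $D'$ agree on the orientation of every edge having exactly one endpoint in $I$ (for edges with both endpoints in $I$ the edge is removed on both sides, imposing no constraint). Edges between different chain components of $\mathcal{E_I}(D)$ are already \mcI-essential and therefore agreed upon by any $D' \sim_\mcI D$, so the restriction of the new constraint to $T$ reduces to agreement on edges of $D[T]$ with exactly one endpoint in $I \cap T$. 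Applying the very same analysis to $\{\emptyset, I \cap T\}$-equivalence of $D[T]$ yields the identical condition on the right-hand side.

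Combining the two steps, $\{D'[T] \spst D' \sim_{\mcI \cup \{I\}} D\}$ and $\{D'' \spst D'' \sim_{\{\emptyset, I \cap T\}} D[T]\}$ are the same set of DAGs on $T$; since both sides of the lemma are formed by orienting precisely those edges whose orientation is common to every element of the respective set, equality follows. The main obstacle is the bookkeeping in the second step: verifying that edges crossing chain components are automatically matched, that edges with both endpoints in $I \cap T$ impose no extra constraint, and that the ``independent reorientation'' property invoked in the first step legitimately lets us identify restrictions $D'[T]$ with arbitrary DAGs in the observational equivalence class of $D[T]$.
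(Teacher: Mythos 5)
Your proof is correct, but it takes a genuinely different route from the paper's. The paper argues at the level of the graph-theoretic characterization of $\mcI$-essential graphs (Theorem \ref{thm:essential-graph-characterization}): it sets $G := \mathcal{E}_{\mcI\cup\{I\}}(D)[T]$ and $H := \mathcal{E}_{\{\emptyset, I\cap T\}}(D[T])$, notes that both contain $D[T]$ and share its skeleton (so no edge can be oppositely oriented), and then proves the inclusion $G \subset H$ by contradiction, choosing a $\preceq_D$-minimal head $b$ of an edge that is undirected in $G$ but directed in $H$ and ruling out each of the four strong-protection configurations of Definition \ref{def:strongly-protected-arrow}; the reverse inclusion is omitted as ``similar.'' You instead argue at the level of equivalence classes: you show that $\{D'[T] \spst D' \sim_{\mcI\cup\{I\}} D\}$ coincides with the $\{\emptyset, I\cap T\}$-equivalence class of $D[T]$, using Theorem \ref{thm:markov-equivalence} to translate the skeleton condition on $D'^{(I)}$ into agreement on the orientation of edges with exactly one endpoint in $I$ (edges with zero or two endpoints in $I$, and edges crossing chain components, impose nothing new), and the independent-reorientation property of chain components to get surjectivity of the restriction map $D' \mapsto D'[T]$. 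What your approach buys is symmetry and completeness: both inclusions fall out of a single set equality, so nothing analogous to the skipped $G \supset H$ direction remains, and the argument stays close to the semantic definition of the essential graph as the union over its class. Its cost is reliance on the completeness half of the representative characterization of \citet{Hauser2012Characterization} (every v-structure-free orientation of a chain component extends to a representative of the class), whereas the paper relies on the strong-protection characterization from the same reference; the two external inputs are of comparable weight, so neither proof is more elementary, but yours is arguably the cleaner of the two.
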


\subsection{Single-vertex Interventions}
\label{sec:single-vertex-target}

We start with the treatment of the first active learning approach mentioned in Section \ref{sec:active-learning}.  By virtue of the following lemma, the maximum in Equation (\ref{eqn:objective-opt-single}) can be calculated without enumerating all representative DAGs in the equivalence class $\mathbf{D}(G)$.  This is a key insight for the development of a feasible algorithm solving Equation (\ref{eqn:objective-opt-single}), \OptSingle{} (Algorithm \ref{alg:opt-single}).

\begin{algorithm}[b!]
  \caption{$\OptSingle(G)$: yields a solution of Equation (\ref{eqn:objective-opt-single}).}
  \label{alg:opt-single}
  \SetKwInOut{Input}{Input}
  \SetKwInOut{Output}{Output}
  \Input{$G = ([p], E)$: \mcI-essential graph.}
  \Output{An optimal intervention vertex $v \in [p]$ in the sense of Equation \ref{eqn:objective-opt-single}, or $\emptyset$ if $G$ only has directed edges.}
  $\xi_0 \leftarrow$ number of unoriented edges in $G$\;
  $v\subscr{opt} \leftarrow 0$; $\xi\subscr{opt} \leftarrow p^2$\;
  \For{$v = 1$ to $p$ \label{ln:opt-single-for-start}}{%
    $\xi\subscr{max} \leftarrow -1$\;
    \ForEach{clique $C \subset \nb_G(v)$ \label{ln:opt-single-foreach-start}}{%
      $\sigma \leftarrow \LexBFS((C, v, \ldots)), E[T_G(v)])$\;
      $D \leftarrow$ DAG with skeleton $G[T_G(v)]$, topological ordering $\sigma$ \label{ln:opt-single-orient}\;
      $G' \leftarrow \mathcal{E}_{\{\emptyset, \{v\}\}}(D)$ \label{ln:opt-single-G}\;
      $\eta \leftarrow$ number of arrows in $G'$; $\xi \leftarrow \xi_0 - \eta$ \label{ln:opt-single-calc-xi}\;
      \lIf{$\xi > \xi\subscr{max}$}{$\xi\subscr{max} \leftarrow \xi$} \label{ln:opt-single-foreach-end}\;
    }
    \lIf{$-1 < \xi\subscr{max} < \xi\subscr{opt}$}{$(v\subscr{opt}, \xi\subscr{opt}) \leftarrow (v, \xi\subscr{max})$} \label{ln:opt-single-for-end} \;
  }
  \lIf{$v\subscr{opt} \neq 0$}{\Return{$v\subscr{opt}$}}
  \lElse{\Return{$\emptyset$}}\;
\end{algorithm}

\begin{lemma}
  \label{lem:local-characterization}
  Let $G$ be an \mcI-essential graph, and let $v \in [p]$.  Assume $D_1$ and $D_2 \in \mathbf{D}(G)$ such that $\{a \in \nb_G(v) \spst a \grarright v \in D_1\} = \{a \in \nb_G(v) \spst a \grarright v \in D_2\} = C$.  Then we have $D_1 \sim_{\mcI'} D_2$ under the family of targets $\mcI' = \mcI \cup \{\{v\}\}$.
\end{lemma}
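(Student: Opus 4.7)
The plan is to apply Theorem \ref{thm:markov-equivalence} (the characterization of $\mcI$-Markov equivalence) directly to the extended family $\mcI' = \mcI \cup \{\{v\}\}$ and verify its two conditions for the pair $(D_1, D_2)$. Since $D_1, D_2 \in \mathbf{D}(G)$ are already $\mcI$-Markov equivalent, the first condition (same skeleton, same v-structures) is inherited for free, and the skeleton condition on intervention graphs $D_i^{(I)}$ already holds for every $I \in \mcI$. The only genuinely new thing to check is that $D_1^{(\{v\})}$ and $D_2^{(\{v\})}$ have the same skeleton.

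Next, I would observe that the skeleton of $D_i^{(\{v\})}$ differs from that of $D_i$ only by the removal of the edges $\{a, v\}$ for $a \in \pa_{D_i}(v)$. Since $D_1$ and $D_2$ share a common skeleton, it therefore suffices to prove the equality $\pa_{D_1}(v) = \pa_{D_2}(v)$, after which both intervention skeletons coincide edge for edge.

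The substantive step is to decompose $\pa_{D_i}(v)$ using the structure of $G$. Every edge incident to $v$ in the common skeleton comes either from a directed arrow in $G$ or from an undirected edge in $G$. Arrows of the form $a \grarright v$ in $G$ are $\mcI$-essential, so they occur (with the same orientation) in every DAG of $\mathbf{D}(G)$ and contribute the same parents to both $D_1$ and $D_2$; arrows $v \grarright a$ in $G$ are likewise essential and contribute children, not parents. The remaining neighbors of $v$ in the skeleton are exactly those in $\nb_G(v)$, and by hypothesis the subset of these that are oriented toward $v$ is the same set $C$ in both $D_1$ and $D_2$. Adding the two contributions gives $\pa_{D_1}(v) = \pa_{D_2}(v)$.

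I expect the argument to be almost entirely bookkeeping; the only delicate point is cleanly separating the parents of $v$ in a representative $D_i \in \mathbf{D}(G)$ into the $\mcI$-essential part (fixed by $G$) and the part coming from orientations of the undirected edges $\nb_G(v)$ (fixed by the assumption on $C$). Once this decomposition is in place, both conditions of Theorem \ref{thm:markov-equivalence} hold for $(D_1, D_2)$ with respect to $\mcI'$, and the lemma follows.
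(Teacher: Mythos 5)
Your proposal is correct and follows exactly the paper's own argument: invoke Theorem \ref{thm:markov-equivalence} for $\mcI'$, note that all conditions except the skeleton of $D_i^{(\{v\})}$ are inherited from $D_1 \sim_\mcI D_2$, and conclude from the fact that all edges incident to $v$ (essential arrows fixed by $G$, undirected edges fixed by the hypothesis on $C$) have the same orientation in $D_1$ and $D_2$. The paper's proof is just a terser version of the same bookkeeping.
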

The following proposition proves the correctness of Algorithm \ref{alg:opt-single}; an illustration of the algorithm is given in Figure \ref{fig:ex-opt-single}.
\begin{proposition}
  \label{prop:opt-single-correct}
  Let $G$ be an \mcI-essential graph.  If $G$ has no undirected edge, $\OptSingle(G) = \emptyset$.  Otherwise, $\OptSingle(G)$ is a single vertex solving Equation \ref{eqn:objective-opt-single}.
\end{proposition}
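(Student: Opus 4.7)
The plan is to show that the inner \textsc{ForEach} loop of Algorithm~\ref{alg:opt-single} computes exactly $\max_{D \in \mathbf{D}(G)} \xi(\mathcal{E}_{\mcI \cup \{\{v\}\}}(D))$, so that the outer loop returns an argmin of Equation~(\ref{eqn:objective-opt-single}). The base case is straightforward: if $G$ has no undirected edge then $\mathbf{D}(G) = \{G\}$ and no single-vertex intervention can orient any further edge, so $\xi_{\max}$ never rises above the sentinel value $-1$ for any $v$ and the algorithm returns $\emptyset$.

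For the substantive case I would proceed in three reductions. First, by Lemma~\ref{lem:intervention-local-effect} the target $\{v\}$ modifies only the chain component $T := T_G(v)$; undirected edges lying outside $T$ are preserved under the new family of targets. Consequently, the number of undirected edges in $\mathcal{E}_{\mcI \cup \{\{v\}\}}(D)$ equals $\xi_0 - \eta$, where $\eta$ counts the arrows of $\mathcal{E}_{\{\emptyset, \{v\}\}}(D[T])$, matching the quantity $\xi$ computed on line~\ref{ln:opt-single-calc-xi}. Second, Lemma~\ref{lem:local-characterization} implies that this value depends on $D$ only through the subset $C := \{a \in \nb_G(v) \spst a \grarright v \in D\}$ of neighbours oriented into $v$. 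The inner maximum over $\mathbf{D}(G)$ therefore reduces to a maximum over admissible sets $C$.

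Third -- and this is where I expect the main obstacle -- I would identify the admissible sets $C$ as the cliques of $G$ contained in $\nb_G(v)$. Necessity is immediate: non-adjacent $a, b \in C$ would produce a v-structure $a \grarright v \grarleft b$ that is absent from $G$, violating $D \sim_\mcI G$. For sufficiency, given any clique $C \subset \nb_G(v)$, set $\sigma := \LexBFS((C, v, \ldots), E[T])$, orient $G[T]$ according to $\sigma$, and complete to a DAG $D$ by orienting the remaining chain components along arbitrary perfect elimination orderings. Because $C \cup \{v\}$ is itself a clique in $G[T]$, the \LexBFS{} property recalled before Proposition~\ref{prop:lex-bfs-peo} guarantees that the output ordering begins with the vertices of $C$ followed by $v$; hence the resulting orientation fulfills $\pa_D(v) \cap \nb_G(v) = C$, and by Proposition~\ref{prop:lex-bfs-peo} no new v-structures are introduced, placing $D$ in $\mathbf{D}(G)$. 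Combining the three reductions shows that as $C$ ranges over all cliques of $G$ in $\nb_G(v)$, the quantity $\xi$ ranges over exactly the attainable values of $\xi(\mathcal{E}_{\mcI \cup \{\{v\}\}}(D))$ for $D \in \mathbf{D}(G)$, so the outer minimization returns a vertex optimal in the sense of Equation~(\ref{eqn:objective-opt-single}).
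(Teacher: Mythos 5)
Your proof is correct and follows essentially the same route as the paper's: reduce to the chain component $T_G(v)$ via Lemma~\ref{lem:intervention-local-effect}, use Lemma~\ref{lem:local-characterization} to see that the utility depends on $D \in \mathbf{D}(G)$ only through the parent set $C \subset \nb_G(v)$, and characterize the admissible sets $C$ as exactly the cliques in $\nb_G(v)$, realized by the $\LexBFS((C, v, \ldots))$ construction. The only difference is that you prove the clique characterization inline (v-structure argument for necessity, \LexBFS{} plus Proposition~\ref{prop:lex-bfs-peo} for sufficiency), whereas the paper invokes it as Proposition~\ref{prop:lex-bfs-clique} citing \citet{Andersson1997Characterization}; everything else, including the degenerate case with no undirected edges, matches.
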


\begin{figure}[t]
  \centering
  \subfigure[$G$]{%
    \begin{optsingleex}
      \draw (v1) -- (v2) -- (v3) -- (v5) -- (v6);
      \draw (v3) -- (v4);
      \draw (v7) -- (v5) -- (v8);
      \draw (v5) -- (v9);
    \end{optsingleex}
  }
  \qquad\qquad
  \subfigure[$D$]{%
    \begin{optsingleex}
      \draw[->] (v2) -- (v1);
      \draw[->] (v3) -- (v2);
      \draw[->] (v3) -- (v4);
      \draw[->] (v3) -- (v5);
      \draw[->] (v5) -- (v6);
      \draw[->] (v5) -- (v7);
      \draw[->] (v5) -- (v8);
      \draw[->] (v5) -- (v9);
    \end{optsingleex}
  }
  
  \subfigure[Family of targets \mcI, \mcI-essential graph, and table of the utility function (worst-case number of unorientable edges $\xi$) as a function of a potential single-vertex intervention target $v$.]{%
    \begin{tabular}{ccc}
      \mcI & \mcI-essential graph & worst case \# of unoriented edges \\\hline
      $\{\emptyset\}$ &
      \begin{optsingleex}
        \draw (v1) -- (v2) -- (v3) -- (v5) -- (v6);
        \draw (v3) -- (v4);
        \draw (v7) -- (v5) -- (v8);
        \draw (v5) -- (v9);
        \draw[thick] (v5) circle (3mm);
      \end{optsingleex} &
      \begin{tabular}{c|*{9}{c}}
        $v$ & $1$ & $2$ & $3$ & $4$ & $5$ & $6$ & $7$ & $8$ & $9$ \\\hline
        $\xi$ & $7$ & $6$ & $4$ & $7$ & $3$ & $7$ & $7$ & $7$ & $7$
      \end{tabular}
      \\
      $\{\emptyset, \{5\}\}$ &
      \begin{optsingleex}
        \draw (v1) -- (v2) -- (v3) -- (v4);
        \draw[->] (v3) -- (v5);
        \draw[->] (v5) -- (v6);
        \draw[->] (v5) -- (v7);
        \draw[->] (v5) -- (v8);
        \draw[->] (v5) -- (v9);
        \draw[thick] (v2) circle (3mm);
      \end{optsingleex} &
      \begin{tabular}{c|*{9}{c}}
        $v$ & $1$ & $2$ & $3$ & $4$ & $5$ & $6$ & $7$ & $8$ & $9$ \\\hline
        $\xi$ & $3$ & $2$ & $2$ & $3$ & $4$ & $4$ & $4$ & $4$ & $4$
      \end{tabular}
      \\
      $\{\emptyset, \{5\}, \{2\}\}$ &
      \begin{optsingleex}
        \draw[->] (v2) -- (v1);
        \draw[->] (v3) -- (v2);
        \draw (v3) -- (v4);
        \draw[->] (v3) -- (v5);
        \draw[->] (v5) -- (v6);
        \draw[->] (v5) -- (v7);
        \draw[->] (v5) -- (v8);
        \draw[->] (v5) -- (v9);
        \draw[thick] (v3) circle (3mm);
      \end{optsingleex} &
      \begin{tabular}{c|*{9}{c}}
        $v$ & $1$ & $2$ & $3$ & $4$ & $5$ & $6$ & $7$ & $8$ & $9$ \\\hline
        $\xi$ & $1$ & $1$ & $0$ & $0$ & $1$ & $1$ & $1$ & $1$ & $1$
      \end{tabular}
      \\
      $\{\emptyset, \{5\}, \{2\}, \{3\}\}$ &
      \begin{optsingleex}
        \draw[->] (v2) -- (v1);
        \draw[->] (v3) -- (v2);
        \draw[->] (v3) -- (v4);
        \draw[->] (v3) -- (v5);
        \draw[->] (v5) -- (v6);
        \draw[->] (v5) -- (v7);
        \draw[->] (v5) -- (v8);
        \draw[->] (v5) -- (v9);
      \end{optsingleex} &
    \end{tabular}
  }
  \caption{(a): observational essential graph $G$; (b): a representative $D$ thereof. (c): steps of \OptSingle{} (Algorithm \ref{alg:opt-single}) sequentially proposing intervention targets, starting with the observational essential graph $G$, under the assumption that $D$ is the true underlying DAG.  Proposed intervention targets are circled in the picture.}
  \label{fig:ex-opt-single}
\end{figure}

\subsection{Interventions at Targets of Arbitrary Size}
\label{sec:unbounded-target}

We now proceed to the solution of Equation (\ref{eqn:objective-opt-unb}).  The following proposition, which was already conjectured by \citet{Eberhardt2008Almost}, shows that the minimum in Equation (\ref{eqn:objective-opt-unb}) only depends on the clique number of $G$:
$$
  \min_{I' \subset [p]} \max_{D \in \mathbf{D}(G)} \omega\left(\mathcal{E}_{\mcI \cup \{I'\}}(D)\right) = \lceil \omega(G) / 2 \rceil\ .
$$
In view of Lemma \ref{lem:local-characterization}, it is again sufficient to restrict the consideration to single chain components:
\begin{proposition}
  \label{prop:optimal-unbounded-target}
  Let $G$ be an undirected, connected, chordal graph on the vertex set $V = [p]$; such a graph is an observational essential graph.
  \begin{subprop}
    \item \label{itm:clique-number-upper-bound} There is an intervention target $I \subset V$ such that for every DAG $D \in \mathbf{D}(G)$, we have
    $$
      \omega(\mathcal{E}_{\{\emptyset, I\}}(D)) \leq \lceil \omega(G)/2 \rceil \ .
    $$
    
    \item \label{itm:clique-number-lower-bound} For every intervention target $I \subset [p]$ there is a DAG $D \in \mathbf{D}(G)$ such that
    $$
      \omega(\mathcal{E}_{\{\emptyset, I\}}(D)) \geq \lceil \omega(G)/2 \rceil \ .
    $$
  \end{subprop}
\end{proposition}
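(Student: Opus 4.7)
Both parts turn on a single structural observation: if $a\in I$, $b\in V\setminus I$ and $a\grline b$ is an edge of $G$, then the orientation of this edge is forced in $\mathcal{E}_{\{\emptyset,I\}}(D)$.  Indeed, if $D$ contains $a\grarright b$ then $D^{(I)}$ retains this arrow (since $b\notin I$), whereas any equivalent DAG $D'$ with the reverse orientation would delete the edge from $D'^{(I)}$ (since $a\in I$), destroying the intervention-graph skeleton.  Hence every chain component of $\mathcal{E}_{\{\emptyset,I\}}(D)$ lies entirely inside $I$ or entirely inside $V\setminus I$, so $\omega(\mathcal{E}_{\{\emptyset,I\}}(D))\leq\max\{\omega(G[I]),\omega(G[V\setminus I])\}$.

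For \ref{itm:clique-number-upper-bound} I would take a \LexBFS-ordering $\sigma$ of $G$, which is a perfect elimination ordering by Proposition \ref{prop:lex-bfs-peo} and hence perfect by Corollary \ref{cor:chordal-graph-perfect-coloring}\ref{itm:perfect-elimination-ordering-perfect}; Algorithm \ref{alg:greedy-coloring} along $\sigma$ then yields a proper colouring with exactly $\chi(G)=\omega(G)$ colours.  Partitioning the colour set $\{1,\ldots,\omega(G)\}$ into two blocks of sizes $\lceil\omega(G)/2\rceil$ and $\lfloor\omega(G)/2\rfloor$ and letting $I$ be the preimage of one block, any clique of $G[I]$ must use pairwise distinct colours from its block, so $\omega(G[I])\leq\lceil\omega(G)/2\rceil$; the same bound holds for $G[V\setminus I]$.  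Combined with the structural observation, this proves \ref{itm:clique-number-upper-bound}.

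For \ref{itm:clique-number-lower-bound}, let $K$ be a maximum clique of $G$, so that $|K|=\omega(G)$ and one of $K\cap I$, $K\setminus I$ has cardinality at least $\lceil\omega(G)/2\rceil$; by symmetry assume $K_1:=K\cap I$ does.  The plan is to construct $D\in\mathbf{D}(G)$ in which all edges of $G[K_1]$ remain undirected in $\mathcal{E}_{\{\emptyset,I\}}(D)$.  Using \LexBFS with $K_1$ as the initial part of the input ordering, I obtain a perfect elimination ordering $\sigma$ whose initial segment is exactly $K_1$; orienting $G$ from earlier to later along $\sigma$ gives a v-structure-free DAG $D$ (Proposition \ref{prop:lex-bfs-peo}).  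For any $v,w\in K_1$, swapping $v$ and $w$ in $\sigma$ produces another PEO $\sigma'$: the predecessors of every vertex in $K_1$ stay inside the clique $K_1$ (hence a clique), and for vertices outside $K_1$ the predecessor set is unchanged.  The induced DAG $D'$ therefore differs from $D$ only on arrows inside $K_1$, so $D$ and $D'$ share skeleton, v-structures and—because every edge of $K_1\subset I$ disappears in both intervention graphs and all other edges are untouched—also the intervention-graph skeleton under intervention at $I$.  Thus $D'\sim_{\{\emptyset,I\}}D$, every edge of $G[K_1]$ is undirected in $\mathcal{E}_{\{\emptyset,I\}}(D)$, and $K_1$ lies in a single chain component with clique number at least $|K_1|\ge\lceil\omega(G)/2\rceil$.

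The main obstacle is the $\mcI$-equivalence statement in \ref{itm:clique-number-lower-bound}: one must be sure that permuting inside $K_1$ preserves the full intervention-graph skeleton, not merely the adjacency of the swapped vertices.  The key is that placing $K_1$ as an initial segment of the PEO fixes the orientation of \emph{every} edge leaving $K_1$ (earlier $\grarright$ later), so permutations inside $K_1$ only touch arrows inside $K_1\subset I$, which are deleted by the $I$-intervention anyway; this also lets us re-package the argument via Lemma \ref{lem:intervention-local-effect}, which identifies the chain component containing $K_1$ with the observational essential graph of $G[K_1]$—the complete undirected graph on $|K_1|$ vertices.
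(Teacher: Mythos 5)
Your proof is correct and takes essentially the same route as the paper's: part \ref{itm:clique-number-upper-bound} via a greedy colouring along a perfect elimination ordering with $I$ the preimage of half the colours, and part \ref{itm:clique-number-lower-bound} by placing (part of) a maximum clique at the head of a perfect elimination ordering and swapping two of its vertices to exhibit $\{\emptyset,I\}$-equivalent reorientations. The only cosmetic differences are that you make the forced orientation of $I$-to-$(V\setminus I)$ edges explicit up front and verify the swapped ordering is a PEO directly rather than re-running \LexBFS{} on a permuted input, both of which are fine.
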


The constructive proof (see \ref{sec:proofs}) shows that a minimizer $I$ of Equation (\ref{eqn:objective-opt-unb}) can be generated by means of an optimal coloring which we can get by greedy coloring along a \LexBFS-ordering (see Proposition \ref{prop:lex-bfs-peo} and Corollary \ref{cor:chordal-graph-perfect-coloring}); this justifies Algorithm \ref{alg:opt-unb}.  An illustration of the algorithm is given in Figure \ref{fig:ex-opt-unb}.

\begin{algorithm}[t]
  \caption{$\OptUnb(G)$: yields a solution of Equation (\ref{eqn:objective-opt-unb}); time complexity is $O(p + |E|)$.  The proof of correctness follows from the proof of Proposition \ref{prop:optimal-unbounded-target} (\ref{sec:proofs}).}
  \label{alg:opt-unb}
  \SetKwInOut{Input}{Input}
  \SetKwInOut{Output}{Output}
  \Input{$G = ([p], E)$: essential graph.}
  \Output{An optimal intervention target $I \subset [p]$ in the sense of Equation \ref{eqn:objective-opt-unb}.}
  $I \leftarrow \emptyset$\;
  \ForEach{$T \in \mathbf{T}(G)$}{%
    $\sigma \leftarrow \LexBFS(T, E[T])$\;
    $c \leftarrow \mathsc{GreedyColoring}(G[T], \sigma)$ \label{ln:opt-unb-coloring}\;
    $\omega \leftarrow \max_{v \in [p]} c(v); \ h \leftarrow \lceil \omega / 2 \rceil$\;
    $I \leftarrow I \cup c^{-1}([h])$\;
  }
  \Return{$I$}\;
\end{algorithm}

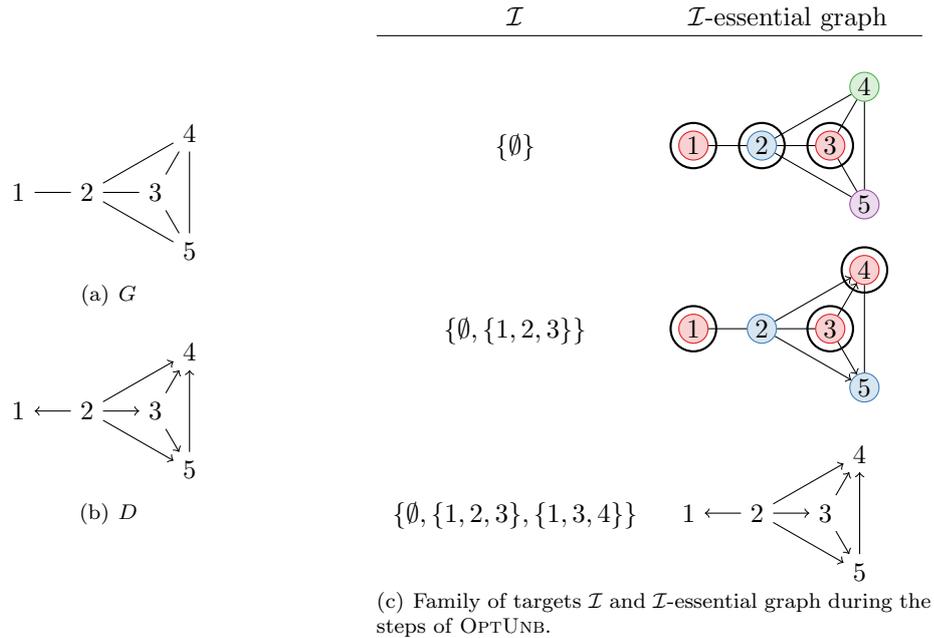
\begin{figure}
  \centering
  \begin{minipage}{0.3\linewidth}
    \subfigure[$G$]{%
      \begin{tikzpicture}
        \node (v1) at (-2\exgredge, 0) {$1$};
        \node (v2) at (-\exgredge, 0)  {$2$};
        \node (v3) at (0, 0)         {$3$};
        \node (v4) at (60:\exgredge)   {$4$};
        \node (v5) at (-60:\exgredge)  {$5$};
        
        \draw (v1) -- (v2) -- (v4) -- (v3) -- (v5) -- (v4);
        \draw (v3) -- (v2) -- (v5);
      \end{tikzpicture}
    }
    
    \subfigure[$D$]{%
      \begin{tikzpicture}
        \node (v1) at (-2\exgredge, 0) {$1$};
        \node (v2) at (-\exgredge, 0)  {$2$};
        \node (v3) at (0, 0)         {$3$};
        \node (v4) at (60:\exgredge)   {$4$};
        \node (v5) at (-60:\exgredge)  {$5$};
        
        \draw[->] (v2) -- (v1);
        \draw[->] (v2) -- (v3);
        \draw[->] (v2) -- (v4);
        \draw[->] (v2) -- (v5);
        \draw[->] (v3) -- (v4);
        \draw[->] (v3) -- (v5);
        \draw[->] (v5) -- (v4);
      \end{tikzpicture}
    }
  \end{minipage}
  \begin{minipage}{0.4\linewidth}
    \subfigure[Family of targets \mcI{} and \mcI-essential graph during the steps of \OptUnb{}.]{%
      \begin{tabular}{cc}
        \mcI & \mcI-essential graph \\\hline
        $\{\emptyset\}$ &
        \begin{tikzpicture}[baseline=(v1.base)]
          \tikzstyle{every node} = [circle, inner sep=1pt, minimum size=2mm]
          \node[fill=graphcol1!20, draw=graphcol1] (v1) at (-2\exgredge, 0) {$1$};
          \node[fill=graphcol2!20, draw=graphcol2] (v2) at (-\exgredge, 0)  {$2$};
          \node[fill=graphcol1!20, draw=graphcol1] (v3) at (0, 0)         {$3$};
          \node[fill=graphcol3!20, draw=graphcol3] (v4) at (60:\exgredge)   {$4$};
          \node[fill=graphcol4!20, draw=graphcol4] (v5) at (-60:\exgredge)  {$5$};
          \node (placeholder) at (\exgredge, 1.5\exgredge) {};
          
          \draw (v1) -- (v2) -- (v4) -- (v3) -- (v5) -- (v4);
          \draw (v3) -- (v2) -- (v5);
          
          \draw[thick] (v1) circle (3mm);
          \draw[thick] (v2) circle (3mm);
          \draw[thick] (v3) circle (3mm);
        \end{tikzpicture} \\
        $\{\emptyset, \{1, 2, 3\}\}$ &
        \begin{tikzpicture}[baseline=(v1.base)]
          \tikzstyle{every node} = [circle, inner sep=1pt, minimum size=2mm]
          \node[fill=graphcol1!20, draw=graphcol1] (v1) at (-2\exgredge, 0) {$1$};
          \node[fill=graphcol2!20, draw=graphcol2] (v2) at (-\exgredge, 0)  {$2$};
          \node[fill=graphcol1!20, draw=graphcol1] (v3) at (0, 0)         {$3$};
          \node[fill=graphcol1!20, draw=graphcol1] (v4) at (60:\exgredge)   {$4$};
          \node[fill=graphcol2!20, draw=graphcol2] (v5) at (-60:\exgredge)  {$5$};
          \node (placeholder) at (\exgredge, 1.5\exgredge) {};
          
          \draw (v1) -- (v2) -- (v3);
          \draw (v4) -- (v5);
          \draw[->] (v2) -- (v4);
          \draw[->] (v2) -- (v5);
          \draw[->] (v3) -- (v4);
          \draw[->] (v3) -- (v5);
          
          \draw[thick] (v1) circle (3mm);
          \draw[thick] (v3) circle (3mm);
          \draw[thick] (v4) circle (3mm);
        \end{tikzpicture} \\
        $\{\emptyset, \{1, 2, 3\}, \{1, 3, 4\}\}$ &
        \begin{tikzpicture}[baseline=(v1.base)]
          \node (v1) at (-2\exgredge, 0) {$1$};
          \node (v2) at (-\exgredge, 0)  {$2$};
          \node (v3) at (0, 0)         {$3$};
          \node (v4) at (60:\exgredge)   {$4$};
          \node (v5) at (-60:\exgredge)  {$5$};
          \node (placeholder) at (\exgredge, 1.5\exgredge) {};
          
          \draw[->] (v2) -- (v1);
          \draw[->] (v2) -- (v3);
          \draw[->] (v2) -- (v4);
          \draw[->] (v2) -- (v5);
          \draw[->] (v3) -- (v4);
          \draw[->] (v3) -- (v5);
          \draw[->] (v5) -- (v4);
        \end{tikzpicture}
      \end{tabular}
    }
  \end{minipage}

  \caption{(a): observational essential graph; (b): representative thereof.  (c): steps of \OptUnb{} (Algorithm \ref{alg:opt-unb}) sequentially proposing intervention targets, starting with the observational essential graph $G$, under the assumption that $D$ is the true underlying DAG.  Proposed intervention targets are circled in the picture; the coloring of the vertices in line \ref{ln:opt-unb-coloring} is shown.}
  \label{fig:ex-opt-unb}
\end{figure}

Since an \mcI-essential graph has only one representative DAG if and only if its clique number is $1$, a direct consequence of Proposition \ref{prop:optimal-unbounded-target} is a (sharp) upper bound on the number of interventions necessary to fully identify a causal model, as it was conjectured by \citet{Eberhardt2008Almost}.

\begin{corollary}
  \label{cor:intervention-number}
  Let $G$ be an \mcI-essential graph.  There is a set of $k = \lceil \log_2(\omega(G)) \rceil$ intervention targets $I_1, \ldots, I_k$ which are sufficient and in the worst case necessary to make the causal structure fully identifiable:
  $$
    \mathcal{E}_{\mcI \cup \{I_1, \ldots, I_k\}}(D) = D \spforall D \in \mathbf{D}(G).
  $$
\end{corollary}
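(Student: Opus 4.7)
The plan is to iterate Proposition \ref{prop:optimal-unbounded-target} twice, using part \ref{itm:clique-number-upper-bound} for sufficiency and an explicit extremal family for necessity.

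For sufficiency I proceed by induction, halving the clique number at each step. Set $G^{(0)} := G$ and $\mcI^{(0)} := \mcI$, and assume inductively that $G^{(j-1)}$ is an $\mcI^{(j-1)}$-essential graph. Applying Proposition \ref{prop:optimal-unbounded-target}\ref{itm:clique-number-upper-bound} separately to each chain component of $G^{(j-1)}$ (each undirected, connected, and chordal) and taking the union of the returned targets yields $I_j \subset [p]$; by Lemma \ref{lem:intervention-local-effect}, interventions act independently on chain components, so for every $D \in \mathbf{D}(G^{(j-1)})$ the resulting graph $G^{(j)} := \mathcal{E}_{\mcI^{(j-1)} \cup \{I_j\}}(D)$ satisfies $\omega(G^{(j)}) \leq \lceil \omega(G^{(j-1)})/2 \rceil$. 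Setting $\mcI^{(j)} := \mcI^{(j-1)} \cup \{I_j\}$ and iterating this bound gives $\omega(G^{(k)}) \leq \lceil \omega(G)/2^k \rceil \leq 1$, where the last inequality uses $2^k \geq \omega(G)$ by the definition of $k$. Since a chain graph with clique number $1$ has no undirected edges, $G^{(k)} = D$ for every $D \in \mathbf{D}(G)$, which proves sufficiency.

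For worst-case necessity I exhibit a concrete hard instance. Take $\mcI = \{\emptyset\}$ and let $G = K_n$ be the complete undirected graph on $n := \omega(G)$ vertices; this is a valid observational essential graph (chordal, no v-structures), and $\mathbf{D}(G)$ consists of one complete DAG per linear ordering of $[n]$. Given any candidate collection of only $k-1$ targets $I_1, \ldots, I_{k-1}$, assign each vertex its \emph{signature} $\sigma(v) := (\mathbbm{1}_{v \in I_1}, \ldots, \mathbbm{1}_{v \in I_{k-1}}) \in \{0,1\}^{k-1}$. Since $n > 2^{k-1}$ by the definition of $k$, pigeonhole yields two distinct vertices $u, v$ with $\sigma(u) = \sigma(v)$. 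Choose $D \in \mathbf{D}(G)$ whose linear order places $u$ immediately before $v$, and let $D'$ be obtained from $D$ by transposing $u$ and $v$; I claim $D \sim_{\mcI \cup \{I_1, \ldots, I_{k-1}\}} D'$, which shows that no $k-1$ targets fully identify $D$.

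The main obstacle is the equivalence check, which I handle via Theorem \ref{thm:markov-equivalence}. Both DAGs are complete, so they share the skeleton $K_n$ and contain no v-structures (every pair of potential parents is adjacent). For the intervention-skeleton condition, a direct calculation shows that the edge $\{a, b\}$ lies in the skeleton of $D^{(I_j)}$ iff the later endpoint in $D$'s linear order is not in $I_j$; the only pair for which this later endpoint differs between $D$ and $D'$ is $\{u, v\}$, and the signature equality $\sigma(u) = \sigma(v)$ forces the resulting condition to coincide for every $I_j$. Hence $D$ and $D'$ have matching intervention skeletons for all $I_j$, establishing $(\mcI \cup \{I_1, \ldots, I_{k-1}\})$-Markov equivalence and completing the necessity argument.
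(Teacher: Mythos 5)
Your proof is correct, and while the sufficiency half follows the paper's route, the necessity half is a genuinely different argument. For sufficiency, the paper does exactly what you do: iterate Proposition \ref{prop:optimal-unbounded-target}\ref{itm:clique-number-upper-bound} chain-component-wise (via Lemma \ref{lem:intervention-local-effect}) so that the clique number at least halves at each step, giving $\lceil \omega(G)/2^k\rceil \le 1$ after $k$ rounds. One subtlety you share with the paper's iterative construction: since $G^{(j-1)}$ depends on the true $D$, the targets $I_2, \ldots, I_k$ you build are adaptive, whereas the corollary as stated asks for one fixed set valid for \emph{all} $D \in \mathbf{D}(G)$ simultaneously. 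The paper closes this gap with a non-adaptive construction -- color each chain component properly with $\omega(G)$ colors and let $I_j$ collect the vertices whose color has a $1$ in the $j$th binary digit, which is a separating system in the sense of \citet{Cai1984Separating} -- and it would be worth adding a sentence of this kind to your sufficiency argument. For necessity, the paper simply calls the bound a ``direct consequence'' of Proposition \ref{prop:optimal-unbounded-target}\ref{itm:clique-number-lower-bound}, which really encodes an adaptive adversary halving argument; your pigeonhole proof on $K_\omega$ (two vertices with identical membership signatures across $k-1$ targets, swapped in a transitive tournament) is self-contained, handles the non-adaptive setting directly, and your verification of $\mcI$-Markov equivalence via Theorem \ref{thm:markov-equivalence} is sound. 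It is in effect the lower-bound half of Cai's separating-system theorem, which the paper only cites; what you lose relative to Proposition \ref{prop:optimal-unbounded-target}\ref{itm:clique-number-lower-bound} is that your hard instance is the complete graph rather than an arbitrary $G$ with the given clique number (the paper's clique-prefix \LexBFS{} argument gives the per-instance statement), but that suffices for ``in the worst case necessary.''
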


The intervention targets $I_1, \ldots, I_k$ of Corollary \ref{cor:intervention-number} can be constructed by iteratively running Algorithm \ref{alg:opt-unb} on $G = \mathcal{E_I}(D)$, $\mathcal{E}_{\mcI \cup \{I_1\}}(D)$, $\mathcal{E}_{\mcI \cup \{I_1, I_2\}}(D)$ etc.  However, they could also be constructed at once by a modification of Algorithm \ref{alg:opt-unb}.  Let $c: [p] \to [\omega(G)]$ be a function such that for each chain component $T \in \mathbf{T}(G)$, $c|_T$ is a proper coloring of $G[T]$.  By defining $I_j$ as the set of all vertices whose color has a $1$ in the $j\supscr{th}$ position of its binary representation, we make sure that for every pair of neighboring vertices $u$ and $v$, there is at least one $j$ such that $|\{u, v\} \cap I_j| = 1$; hence the edge between $u$ and $v$ is orientable in $\mathcal{E}_{\mcI \cup \{I_1, \ldots, I_k\}}(D)$.  Since the binary representation of $\omega(G)$, the largest color in $c$, has length $k = \lceil \log_2(\omega(G)) \rceil$, this procedure creates a set of $k$ intervention targets that fulfill the requirements of Corollary \ref{cor:intervention-number}.

The problem of finding intervention targets to fully identify a causal model is related to the problem of finding separating systems of the chain components of essential graphs \citep{Eberhardt2007Causation}.  A \textbf{separating system} of an undirected graph $G = (V, E)$ is a subset $\mathcal{F}$ of the power set of $V$ such that for each edge $a \grline b \in G$, there is a set $F \in \mathcal{F}$ with $|F \cap \{a, b\}| = 1$.  \citet{Cai1984Separating} has shown that the minimum separating system of a graph $G$ has cardinality $\lceil \log_2(\chi(G)) \rceil$; together with the fact that clique number and chromatic number of the chordal chain components of essential graphs coincide (Corollary \ref{cor:chordal-graph-perfect-coloring} and Proposition \ref{prop:perfectly-orderable-graphs-perfect}), this also proves Corollary \ref{cor:intervention-number}.  The proof of \citet{Cai1984Separating} uses arguments similar to ours given in the paragraph above for the non-iterative determination of the targets $I_1, \ldots, I_k$ of Corollary \ref{cor:intervention-number}.

\subsection{Discussion}
\label{sec:algorithms-discussion}

\LexBFS{} and \textsc{GreedyColoring} have a time complexity of $O(p + |E|)$ when executed on a graph $G = ([p], E)$.  Thus, \OptUnb{} (Algorithm \ref{alg:opt-unb}) also has a linear complexity.\footnote{In contrast, finding a minimum separating set on \emph{non-chordal} graphs is NP-complete \citep{Cai1984Separating}.}  The time complexity of \OptSingle{} (Algorithm \ref{alg:opt-single}) on the other hand depends on the size of the largest clique in the \mcI-essential graph $G$.  By restricting \OptSingle{} to \mcI-essential graphs with a bounded vertex degree, its complexity is polynomial in $p$; otherwise, it is in the worst case exponential.

We emphasize that our two active learning algorithms do \emph{not} optimize the same objective; \OptUnb{} does \emph{not} guarantee maximal identifiability after each intervention, and \OptSingle{} does \emph{not} guarantee a minimal number of single-vertex interventions to full identifiability.

Consider for example the (observational) essential graph $G$ in Figure \ref{fig:ex-opt-single}(a).  All its representatives are fully identifiable after at most two single-vertex interventions: the first intervention should be performed at vertex $3$, the second one either at vertex $2$ or $5$.  This can be seen as follows: after intervening at vertex $3$, at least the edges to its $G$-neighbors can be oriented.  Since the true underlying DAG cannot have v-structures in this example (they would be identifiable even in the observational case, see Theorem \ref{thm:markov-equivalence}, and hence be present in the observational essential graph $G$), at most one of the arrows between vertex $3$ and its $G$-neighbors can point into $3$.  If we find $2 \grarright 3 \in D$, the line between $1$ and $2$ needs an additional intervention at vertex $2$ (or $1$) to be oriented; if we find $5 \grarright 3 \in D$, an additional intervention at vertex $5$ is sufficient to orient all edges between vertices $5$ to $9$.  If the DAG in Figure \ref{fig:ex-opt-single}(b) represents the true causal model, however, \OptSingle{} will need \emph{three} steps to full identifiability; it will iteratively propose interventions at targets $5$, $2$ and $3$ (see Figure \ref{fig:ex-opt-single}(c)).  Note that if interventions at several vertices yield the same number of unorientable edges, the algorithm chooses the candidate with the smallest index; this is the case in the second and third step in Figure \ref{fig:ex-opt-single}.

In general, \OptUnb{} does not yield an intervention target of minimal size.  With two straightforward improvements, we could reduce the number of intervened vertices: first, we could take $h \leftarrow \lfloor \omega/2 \rfloor$ instead of $h \leftarrow \lceil \omega/2 \rceil$ in Algorithm \ref{alg:opt-unb}; the proof of Proposition \ref{prop:optimal-unbounded-target} is also valid with this choice.  Second, we could permute the colors produced by the greedy coloring such that $|c^{-1}(\{1\})| \leq |c^{-1}(\{2\})| \leq \ldots$.  However, since an optimal coloring of a graph is not unique, not even up to permutation of colors, these heuristic improvements would still not guarantee an intervention target of minimal size with the properties required in Proposition \ref{prop:optimal-unbounded-target}.  In the example shown in Figure \ref{fig:ex-opt-unb}, the intervention targets chosen by \OptUnb{} do clearly not have minimal size: the alternative targets $\{4, 5\}$ and $\{2, 5\}$ (which would result from reordering the colors are proposed above) would have exactly the same effect on edge orientations as the chosen ones.

\section{Experimental evaluation}
\label{sec:evaluation}

We evaluated Algorithms \ref{alg:opt-single} and \ref{alg:opt-unb} in a simulation study on $4000$ randomly generated Gaussian causal models with vertex numbers $p \in \{10, 20, 30, 40\}$.  More details are provided in the next subsection.

\subsection{Methods and Models}
\label{sec:simulation-methods}

We considered two experimental settings which we refer to as ``oracle case'' and ``sample case''.

In the oracle case, we only considered the identifiability of causal models under a given family of intervention targets, neglecting estimation errors.  This setting, which corresponds to an infinite sample case, aims at solely evaluating the selection of intervention targets as performed by Algorithms \ref{alg:opt-single} and \ref{alg:opt-unb} in the absence of estimation errors.  We compared five algorithms proposing intervention targets: our algorithms \OptSingle{} and \OptUnb{}, a purely random proposition of single-vertex interventions (denoted by \Rand), a slightly advanced random approach that randomly chooses any vertex which has at least one incident undirected edge (denoted by \RandAdv), and a method introduced by \citet{He2008Active} choosing the vertex with the maximum number of neighbors (denoted by \MaxNb).  \citet{He2008Active} proposed this approach as a feasible heuristic to their exponential-time algorithm for finding a minimum set of single-vertex interventions leading to full identifiability of all DAGs in a Markov equivalence class.

In the sample case, we estimated interventional Markov equivalence classes from simulated data and applied the algorithms listed above on \emph{estimated} instead of \emph{true} interventional essential graphs.  This allows to investigate the influence of finite sample size and the corresponding estimation errors to the overall performance of active learning approaches.  In this setting, we evaluated \Rand{}, \RandAdv{}, \OptSingle{} and \OptUnb{} together with GIES (Greedy Interventional Equivalence Search) \citep{Hauser2012Characterization}, an interventional generalization of the GES algorithm of \citet{Chickering2002Optimal} that estimates an interventional essential graph from jointly observational and interventional data (possibly originating from different interventions).  We also considered the algorithm of \citet{He2008Active} that estimates the observational Markov equivalence class from observational data in a first step and orients previously unoriented edges by testing independence between an intervened variable and a neighbor in a second step.  To estimate the observational essential graph, we used the PC \citep{Spirtes2000Causation} and the GES \citep{Chickering2002Optimal} algorithm; since we simulated data from Gaussian distributions (see below), we used t-tests on a significance level of $5\%$ to detect dependences between an intervened and a neighoring vertex.

We always started from the observational essential graph (or an estimate thereof based on observational data) and proceeded by including the intervention targets proposed by the different active learning algorithms.  In the oracle case, all active learning approaches lead to full identifiability of the true causal model after a finite number of steps.  In the sample case, differences between estimated essential graph and true DAG are not only due to limited identifiability, but also due to estimation errors.  In particular, we can end up with a \emph{wrongly estimated DAG}: an estimated essential graph consisting of oriented edges only that does not correspond to the true DAG.  In this case, the active learning strategies \OptSingle, \OptUnb{}, \MaxNb{} and \RandAdv{} do not propose new targets any more.  To avoid getting stuck in wrong DAGs, we continued by drawing additional observational data points in order to get larger samples.

For each vertex number $p = \{10, 20, 30, 40\}$, we randomly generated $1000$ DAGs with a binomial distribution of vertex degrees, having an expected degree of $3$.  Data used in the sample case were drawn from Gaussian distributions.  A Gaussian causal model can be represented by an edge weight for each arrow and an error variance for each vertex of the DAG; we randomly drew edge weights from a uniform distribution on $[-1.5, -0.5] \cup [0.5, 1.5]$ and error variances from a uniform distribution on $[0.01, 0.2]$.  Interventions were performed with expectation $4$ times as large as the mean observational marginal standard deviations.  We used sample sizes of $n \in \{50, 100, 200, 500, 1000, 2000, 5000\}$ \emph{per intervention target}.  As soon as the algorithms claimed full identifiability (that is, ended up with an interventional Markov equivalence class with a single representative), we drew the same number of observational data points in the following step.  We used the procedure presented in \citet{Hauser2012Characterization} to draw observational or interventional samples from Gaussian causal models.

In order to assess the distance of estimated \mcI-essential graphs to the true DAG, we used the \textbf{structural Hamming distance} or \textbf{SHD} \citep{Brown2005Comparison} (in a slightly adapted version of \citet{Kalisch2007Estimating}), which is the sum of false positives of the skeleton, false negatives of the skeleton and wrongly oriented edges.  Furthermore, we defined the ``survival time'' of a DAG as the number of active learning steps needed for correct reconstruction of the true DAG, measured in intervention targets ($T$) or intervened variables ($V$).  If the structure of a  DAG was correctly learned for example under the family $\mcI = \{\emptyset, \{2\}, \{1, 4\}\}$, we counted $T = 2$ (non-empty) targets and $V = 3$ variables.  For each vertex number and algorithm, we estimated the ``survival function'', that is the probability $S_T(t) := P[T > t]$ or $S_V(v) := P[V > v]$, respectively, with a Kaplan-Meier estimator \citep{Kaplan1958Nonparametric}.

\subsection{Results}
\label{sec:simulation-results}

Figure \ref{fig:identifiability-survival} shows the survival functions of the active learning algorithms in the oracle case.  \Rand{} was clearly beaten by all other approaches, and \OptUnb{} clearly dominated all other strategies in terms of intervention targets.  However, if we measure the number of intervened vertices, \OptUnb{} was even slightly worse than \RandAdv, what is not surprising in view of the discussion of the size of intervention targets in Section \ref{sec:algorithms-discussion}.  \OptSingle{} gave a significant improvement over \RandAdv; however, the step from \Rand{} to \RandAdv{} is much larger than from \Rand{} to \OptSingle.  The performance of \MaxNb{} is not distinguishable from that of \OptSingle{} on a significance level of $5\%$ (Figure \ref{fig:identifiability-survival}, lower row).  This is a good reason to use the computationally simpler heuristic \MaxNb{} instead of \OptSingle{}.

\begin{figure}
  \centering
  \includegraphics{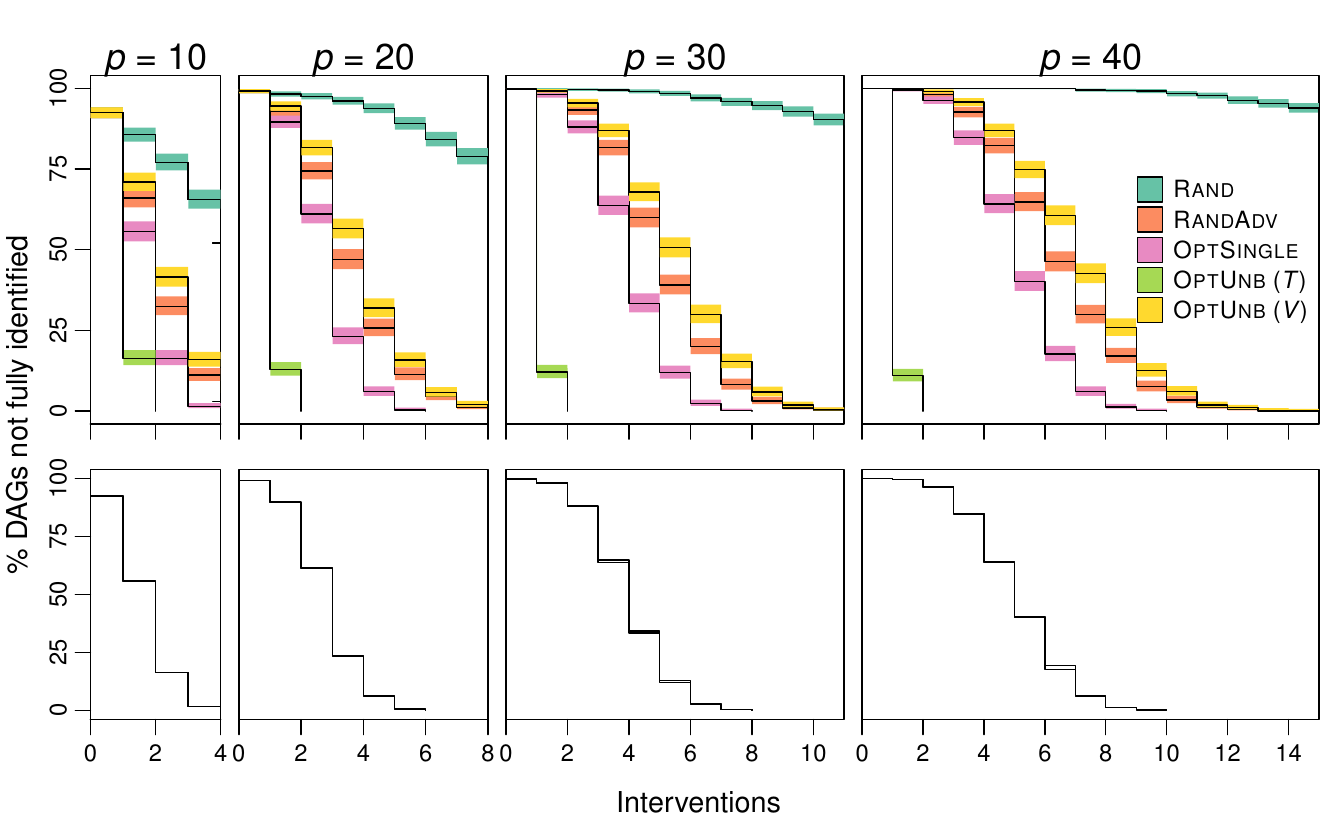}
  \caption{Number of intervention steps needed for full identifiability of DAGs in the oracle case, measured in targets (T) or intervened variables (V); for algorithms proposing only single-vertex interventions, both numbers are the same.  Thin lines: Kaplan-Meier estimates; shaded bands in the upper row: $95\%$ confidence region.  Lower row: survival curves of \OptSingle{} (lower line) and \MaxNb{} (upper line); they mostly coincide.}
  \label{fig:identifiability-survival}
\end{figure}
\begin{figure}
  \centering
  \includegraphics{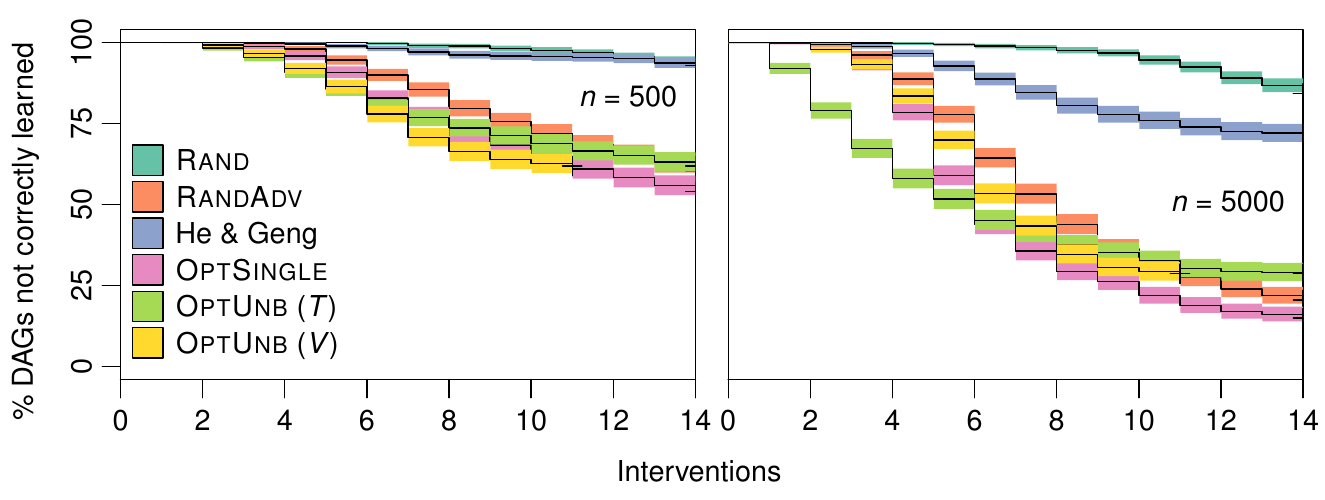}
  \caption{Number of intervention steps needed for correctly learning DAGs with $p = 30$ in the sample case, measured in targets (T) or intervened variables (V).  Thin lines: Kaplan-Meier estimates; shaded bands: $95\%$ confidence region.  Comparison of two different sample sizes per intervention: $n = 500$ and $n = 5000$.  GES was used for the observational estimation step in the algorithm of \citet{He2008Active}; the PC algorithm showed a worse performance (data not shown).  }
  \label{fig:survival-gies}
\end{figure}

As expected, the algorithms performed worse when applying active learning strategies to \emph{estimated} interventional essential graphs: the survival functions are flatter in the sample than in the oracle case (Figure \ref{fig:survival-gies}), and the performance differences between the algorithms shrink.  The active learning strategies proposed in this paper, namely combining \OptSingle{} or \OptUnb{} with GIES, show significantly better performance than the approach of \citet{He2008Active}; for $n = 500$ data points per intervention drawn from models with $p = 30$ for example, the approach of \citet{He2008Active} is almost not distinguishable from GIES used with randomly chosen intervention targets (left plot of Figure \ref{fig:survival-gies}).  The results reported in Figure \ref{fig:survival-gies} were found using GES for the observational estimation step; the PC algorithm showed a worse performance.  The significance level of the t-test used to orient edges turned out to have little relevance for the overall performance: runs with a significance level of $1\%$ and $10\%$ did not give significantly different results than the runs with a significance level of $5\%$ shown in the figure.  Since we have seen that the heuristic \MaxNb{} works as well as \OptSingle{} in the oracle case, the results in Figure \ref{fig:survival-gies} shows the advantage of having an estimation algorithm that considers the ensemble of interventional and observational data as a whole instead of decoupling the estimation into two stages, one only considering observational and one only interventional data.

The hierarchy of algorithms \OptSingle{}, \OptUnb{}, \Rand{} and \RandAdv{} found in the oracle case (Figure \ref{fig:identifiability-survival}) is still visible for GIES estimates with $n = 5000$ data points per intervention for, say, up to $7$ intervention steps.  The performances of \RandAdv{}, \OptSingle{} and \OptUnb{} however do not significantly differ any more with $n = 500$ data points per intervention.

The SHD between the \mcI-essential graph and the true DAG (Figure \ref{fig:shd-oracle-gies}) describes the difference between the estimated and the true models.  In the oracle case, the relative performance of the competing algorithms resembles that of Figure \ref{fig:identifiability-survival}.  In the sample case, however, the results strongly depend on sample size: for $n = 50$ samples per target, random interventions performed as well as optimally chosen ones in combination with GIES; for $n = 5000$ samples per target, the superiority of \OptSingle{} and \OptUnb{} over their random competitors became noticeable.  For all sample sizes, the approach of \citet{He2008Active} showed a markedly worse performance.

\begin{figure}
  \centering
  \includegraphics{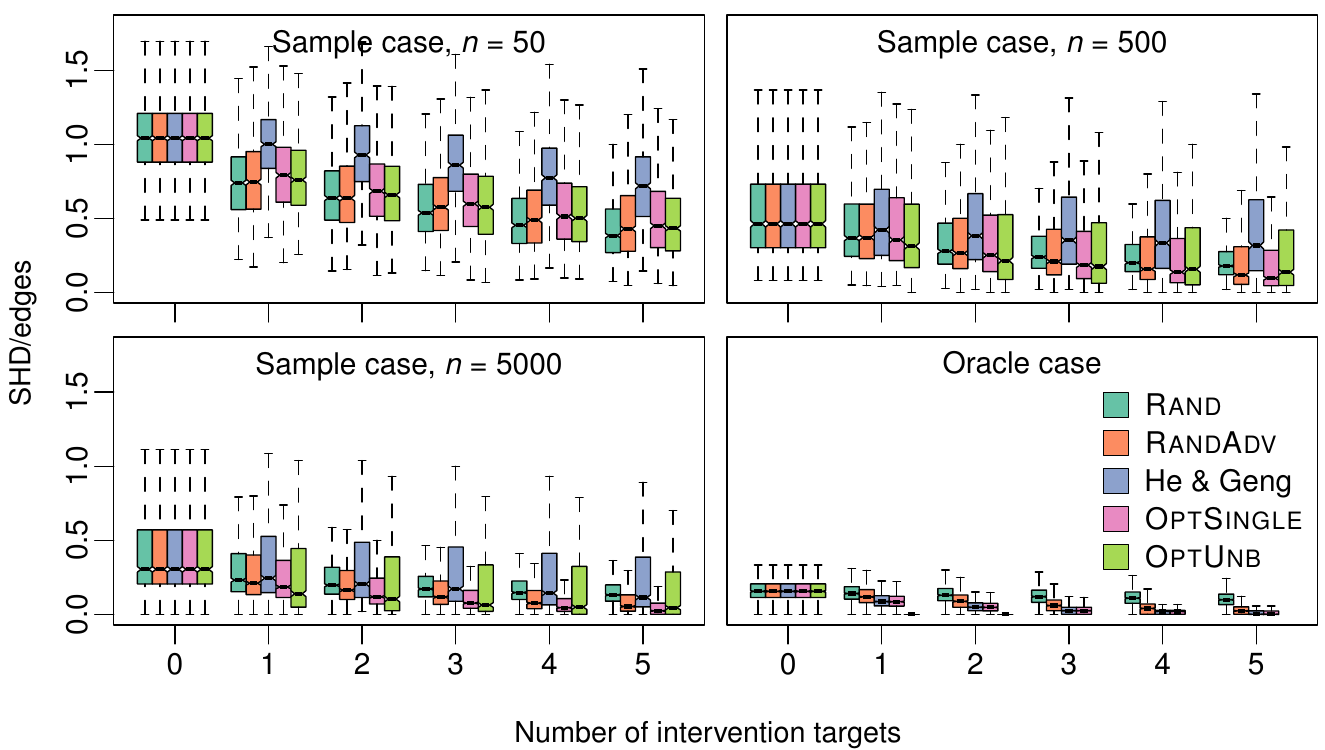}
  \caption{SHD to true DAG per edges of the true DAG as a function of the number of intervention targets.  Plots for DAGs with $p = 30$ vertices, using different active learning approaches and sample sizes.}
  \label{fig:shd-oracle-gies}
\end{figure}

\section{Conclusion}
\label{sec:conclusion}

We developed two algorithms which propose optimal intervention targets: one that finds the single-vertex intervention which maximally increases the number of orientable edges (called \OptSingle), and one that maximally reduces the clique number of the non-orientable edges with an intervention at arbitrarily many variables (called \OptUnb).  We proved a conjecture of \citet{Eberhardt2008Almost} concerning the number of interventions sufficient and in the worst case necessary for fully identifying a causal model by showing that \OptUnb{} yields, when applied iteratively, a \emph{minimum} set of intervention targets that guarantee full identifiability.

In a simulation study, we demonstrated that both algorithms lead significantly faster to full identifiability than randomly chosen interventions in the oracle case, that is, not taking estimation errors originating from finite sample size into account.  If we count the total number of intervened variables, however, \OptUnb{} performed slightly worse than a random approach.  This illustrates the fact that sequentially intervening single variables yields in general more identifiability that intervening those variables simultaneously.  With few samples in the range of $n \approx p$ per intervention, estimation errors dominated over limited identifiability.

In the finite sample case with reasonably large sample sizes, our approach clearly outperformed the algorithm of \citet{He2008Active}, demonstrating the benefit of an estimation algorithm operating on an ensemble of interventional and observational data.

\subsection*{Acknowledgments}

We thank Jonas Peters, Frederick Eberhardt and the anonymous reviewers for valuable comments on the manuscript.



\appendix

\section{Proofs}
\label{sec:proofs}

The proof of Lemma \ref{lem:intervention-local-effect} is rather technical and heavily based on the graph theoretic characterization of interventional essential graphs in Theorem \ref{thm:essential-graph-characterization} developed by \citet{Hauser2012Characterization}.  The theorem relies on the following notion of ``strongly protected arrows'':

\begin{definition}[Strong protection; \citet{Hauser2012Characterization}]
  \label{def:strongly-protected-arrow} Let $G$ be a (partially oriented) graph, and \mcI{} a family of intervention targets.  An arrow $a \grarright b \in G$ is \textbf{strongly \mcI-protected} in $G$ if there is some $I \in \mcI$ such that $|I \cap \{a, b\}| = 1$, or the arrow $a \grarright b$ occurs in at least one of the following four configurations as an induced subgraph of $G$:
  \begin{center} 
    (a): \threegraph{$a$}{->}{$b$}{}{$c$}{->} \
    (b): \threegraph{$a$}{->}{$b$}{<-}{$c$}{} \
    (c): \threegraph{$a$}{->}{$b$}{<-}{$c$}{<-} \
    (d):
    \begin{tikzpicture}[baseline=(a.base)]
      \node[anchor=base east] (a) at (0, 0) {$a$};
      \node[anchor=base west] (b) at (1.6, 0) {$b$};
      \node[anchor=base] (c1) at (0.8,  0.7) {$c_1$};
      \node[anchor=base] (c2) at (0.8, -0.7) {$c_2$};
      \draw[->] (a)  -- (b);
      \draw[->] (c1) -- (b);
      \draw[->] (c2) -- (b);
      \draw[-]  (a)  -- (c1);
      \draw[-]  (a)  -- (c2);
    \end{tikzpicture}
  \end{center}
\end{definition}

\begin{theorem}[\citet{Hauser2012Characterization}]
  \label{thm:essential-graph-characterization}
  A graph $G$ is the \mcI-essential graph of a DAG $D$ if and only if
  \begin{subprop}
    \item \label{itm:chain-graph} $G$ is a chain graph;
    
    \item \label{itm:chordal-chain-components} for each chain component $T \in \mathbf{T}(G)$, $G[T]$ is chordal;
    
    \item \label{itm:forbidden-subgraph} $G$ has no induced subgraph of the form $a \grarright b \grline c$;
    
    \item \label{itm:forbidden-edge} $G$ has no line $a \grline b$ for which there exists some $I \in \mcI$ such that $|I \cap \{a, b\}| = 1$;
    
    \item \label{itm:strongly-protected-arrows} every arrow $a \grarright b \in G$ is strongly \mcI-protected.
  \end{subprop}
\end{theorem}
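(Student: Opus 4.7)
The plan is to prove each direction of the characterization separately, using Theorem \ref{thm:markov-equivalence} to translate between \mcI-Markov equivalence and its graphical consequences.

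For necessity, assume $G = \mathcal{E_I}(D)$. Condition (i) follows because any partially directed cycle in $G$ could be closed into a directed cycle in some representative DAG, contradicting acyclicity. Condition (ii) is inherited from the observational characterization of \citet{Andersson1997Characterization}: within a single chain component the interventional constraint of Theorem \ref{thm:markov-equivalence}(ii) is vacuous, because (iv) will force the variables of that component to lie entirely inside or entirely outside each $I \in \mcI$, so chordality reduces to the purely observational case. Condition (iii) holds because orienting a line $b \grline c$ adjacent to an arrow $a \grarright b$ as $c \grarright b$ would introduce a new v-structure absent from $D$, contradicting Theorem \ref{thm:markov-equivalence}(i). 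Condition (iv) is immediate from Theorem \ref{thm:markov-equivalence}(ii): if $|I \cap \{a,b\}| = 1$ for some $I \in \mcI$, the two possible orientations of $a \grline b$ would produce different skeletons of $D^{(I)}$. Finally, (v) is proved by contraposition: I would show that any arrow $a \grarright b$ missing all four configurations (a)--(d) and having no target $I \in \mcI$ separating $\{a,b\}$ can be reversed while preserving both clauses of Theorem \ref{thm:markov-equivalence}, contradicting its essentiality.

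For sufficiency, assume $G$ satisfies (i)--(v). I would construct a DAG $D \in \mathbf{D}(G)$ by orienting each chain component $G[T]$ along a perfect elimination ordering obtained from \LexBFS{}, which exists by chordality from (ii); by Proposition \ref{prop:lex-bfs-peo} this introduces no v-structures inside chain components, and (iii) prevents v-structures straddling a chain component boundary, so the union with the pre-existing directed edges of $G$ yields a valid DAG $D$. Verifying $\mathcal{E_I}(D) = G$ then breaks into two tasks: every arrow of $G$ must remain essential, which follows inductively from (v) because strong protection propagates the orientation to every \mcI-equivalent DAG; and every line of $G$ must remain non-essential, which requires exhibiting for each line $a \grline b$ two \mcI-Markov equivalent DAGs orienting it oppositely, which is possible precisely because no configuration (a)--(d) is triggered and no $I \in \mcI$ separates $\{a,b\}$ by (iv).

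The main obstacle is the case analysis underlying condition (v) in both directions. In the necessity direction, one must produce, for a hypothetical non-protected arrow, an explicit reversed DAG still satisfying both clauses of Theorem \ref{thm:markov-equivalence}; this requires checking that neither new v-structures nor changes in any intervention-graph skeleton arise, which is where configurations (b), (c), (d) and the interventional separation clause are exactly the obstructions. In the sufficiency direction, the delicate step is the dual argument that a line in $G$ may freely be oriented either way: (iv) rules out an intervention forcing one direction, and the absence of the four configurations rules out acyclicity or v-structure constraints forcing one. The observational backbone closely mirrors \citet{Andersson1997Characterization}'s proof, while the interventional layer (condition (iv) together with the separating-target clause of strong protection) sits cleanly on top via Theorem \ref{thm:markov-equivalence}(ii).
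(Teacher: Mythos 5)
First, a point of reference: this paper does not prove Theorem~\ref{thm:essential-graph-characterization} at all --- it is imported verbatim from \citet{Hauser2012Characterization} and used as a black box in the appendix. So there is no in-paper proof to compare against; the relevant benchmark is the proof in that earlier work, which follows the \citet{Andersson1997Characterization} strategy for the observational case and adds a substantial interventional layer. Your plan correctly reproduces the architecture of that proof: necessity of (i)--(iv) by the arguments you give (and your observation that (iv) forces each chain component to lie entirely inside or entirely outside every $I \in \mcI$, so that the chordality argument reduces to the observational one, is exactly the right reduction), and sufficiency by orienting chain components along perfect elimination orderings and then verifying both inclusions between $G$ and $\mathcal{E_I}(D)$.

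The gap is that the two steps you yourself flag as ``the main obstacle'' are precisely the content of the theorem, and your proposal does not carry them out. For necessity of (v) you must, given an arrow $a \grarright b$ with $|I \cap \{a,b\}| \neq 1$ for all $I \in \mcI$ and not occurring in any of configurations (a)--(d), actually exhibit a DAG in the class with $b \grarright a$; this requires a careful choice (typically a minimal or extremal violating arrow) and a verification that the reversal creates no directed cycle, no new v-structure, and no skeleton change in any $D^{(I)}$ --- none of which is sketched. For sufficiency, the claim that essentiality of arrows ``follows inductively from (v) because strong protection propagates the orientation'' is not an argument: strong protection is a local property of $G$, and the standard proof assumes some $D' \sim_\mcI D$ reverses an arrow of $G$, takes a minimal such arrow, and derives a contradiction through the four configurations; that case analysis is exactly what is missing. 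A smaller but real confusion: for showing lines are non-essential you invoke ``no configuration (a)--(d) is triggered,'' but those configurations pertain to condition (v), which constrains \emph{arrows} only; the reversibility of a line $a \grline b$ rests instead on chordality (condition (ii), via Proposition~\ref{prop:lex-bfs-clique}-type arguments giving two perfect elimination orderings that orient $a \grline b$ oppositely) together with condition (iv). As it stands the proposal is a correct roadmap, not a proof.
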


\begin{proof}[Proof of Lemma \ref{lem:intervention-local-effect}]
  To shorten notation, we define $G := \mathcal{E}_{\mcI \cup \{I\}}(D)[T]$ and $H := \mathcal{E}_{\{\emptyset, I \cap T\}}(D[T])$.
  
  $G$ has the same skeleton as $D[T]$ and $H$.  Furthermore, since a DAG is contained in its essential graph in the graph theoretic sense, we have $D[T] \subset G$ and $D[T] \subset H$.  We conclude that the two graphs $G$ and $H$ cannot have arrows of opposite orientation.
  
  To prove the graph inclusion $G \subset H$, it remains to show that every undirected edge $a \grline b \in G$ is also undirected in $H$.
  
  Assume, for the sake of contradiction, that this is wrong.  Then there are vertices $a$ and $b$ with $a \grline b \in G$ and $a \grarright b \in H$.  We assume without loss of generality that $b$ is minimal in
  $$
    B := \{v \in [p] \spst \spexists u \in [p]: u \grline v \in G, u \grarright v \in H\}
  $$
  with respect to $\preceq_D$, the preorder on the vertices $[p]$ defined by $u \preceq_D v :\Leftrightarrow \exists$ path from $u$ to $v$ in $D$.
  
  Since $H = \mathcal{E}_{\{\emptyset, I \cap T\}}(D[T])$ is an interventional essential graph, the arrow $a \grarright b$ is strongly $\{\emptyset, I \cap T\}$-protected in $H$ (Theorem \ref{thm:essential-graph-characterization}\ref{itm:strongly-protected-arrows}, Definition \ref{def:strongly-protected-arrow}).  If $|I \cap \{a, b\}| = 1$, the edge between $a$ and $b$ would also be oriented in $G$, a contradiction.  Hence the arrow occurs in one of the four configurations of Definition \ref{def:strongly-protected-arrow} in $H$; we consider the four possibilities separately, showing that each of them leads to a contradiction; this proves our claim.
  \begin{enumerate}[label=(\alph*), nolistsep]
    \item The arrow $c \grarright a$ must also be present in $G$ by the minimality assumption on $b$.  This means that $c \grarright a \grline b$ is an induced subgraph of $G$, contradicting Theorem \ref{thm:essential-graph-characterization}\ref{itm:forbidden-subgraph}.
    
    \item By Theorem \ref{thm:markov-equivalence}(i), the v-structure $a \grarright b \grarleft c$ would also be present in $D[T]$ and hence in $G$, a contradiction.
    
    \item By the assumption of minimality of $b$, the arrow $a \grarright c$ must also be present in $G$.  This means that $\gamma = (a, c, b, a)$ is a \emph{directed} cycle in $G$ and hence in $\mathcal{E}_{\mcI \cup \{I\}}(D)$, contradicting Theorem \ref{thm:essential-graph-characterization}\ref{itm:chain-graph}.
    
    \item Consider the orientation of the arrows between $c_1$, $c_2$ and $a$ in $D$.  At least one of the arrows must point away from $a$, because there would be an additional v-structure otherwise.  Assume without loss of generality that $a \grarright c_1 \in D$; the argument is then the same as in configuration (c).
  \end{enumerate}

  We skip the proof the graph inclusion $G \supset H$ here; it can be done by very similar arguments.
\end{proof}

Lemma \ref{lem:local-characterization} is a simple consequence of Theorem \ref{thm:markov-equivalence}:

\begin{proof}[Proof of Lemmma \ref{lem:local-characterization}]
  From Theorem \ref{thm:markov-equivalence}, we know that $D_1$ and $D_2$ have the same skeleton and the same v-structures, and that the intervetnino DAGs $D_1^{(I)}$ and $D_2^{(I)}$ have the same skeleton for all $I \in \mcI$.  It remains to show that $D_1^{(\{v\})}$ and $D_2^{(\{v\})}$ have the same skeleton; this is indeed the case since all edges incident to $v$ have the same orientation in $D_1$ and $D_2$.
\end{proof}

Proposition \ref{prop:opt-single-correct} proves the correctness of Algorithm \ref{alg:opt-single}, \OptSingle.  Its proof is based on the fact that every clique in $\nb_G(v)$ is an admissible set $C$ in the sense of Lemma \ref{lem:local-characterization} and vice versa:
\begin{proposition}[\citet{Andersson1997Characterization}]
  \label{prop:lex-bfs-clique}
  Let $G$ be an undirected chordal graph, $a \in [p]$ and $C \subset \nb_G(a)$.  There is a DAG $D \subset G$ with $D^u = G$ and $\{b \in \nb_G(a) \spst b \grarright a \in D\} = C$ which is oriented according to a perfect elimination if and only if $C$ is a clique.
\end{proposition}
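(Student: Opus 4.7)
The plan is to prove both directions by tying the condition on $C$ directly to the defining property of a perfect elimination ordering (PEO) of the chordal graph $G$, and then using the \LexBFS{} variant described earlier in the paper to realize an appropriate PEO in the non-trivial direction.

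For the forward implication, assume there exists a DAG $D \subset G$ with $D^u = G$ that is oriented according to some PEO $\sigma = (v_1, \ldots, v_p)$ of $G$, and with $\{b \in \nb_G(a) \spst b \grarright a \in D\} = C$. Since $D$ is oriented along $\sigma$, for any vertex $w$ we have $\pa_D(w) = \nb_G(w) \cap \{v_1, \ldots, v_{\sigma^{-1}(w)-1}\}$. By the very definition of a perfect elimination ordering, this set is a clique in $G$. Applying this observation to $w = a$ yields that $C = \pa_D(a)$ is a clique in $G$.

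For the backward implication, suppose $C$ is a clique. Since every element of $C$ is a neighbor of $a$ in $G$, the set $C \cup \{a\}$ is itself a clique in $G$. I would now invoke the \LexBFS{} variant summarized in Section~\ref{sec:background}: produce $\sigma := \LexBFS((C, a, \ldots), E)$, using any ordering of $C$ and any completion of the suffix. Because the first $|C| + 1$ vertices of the input ordering form a clique in $G$, the cited clique-prefix property guarantees that $\sigma$ also begins with (the vertices of) $C$ followed immediately by $a$. By Proposition~\ref{prop:lex-bfs-peo}, $\sigma$ is a perfect elimination ordering of $G$, so orienting $G$ according to $\sigma$ produces a DAG $D$ (automatically satisfying $D \subset G$ and $D^u = G$) that is oriented according to a PEO. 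Finally, the parents of $a$ in $D$ are $\nb_G(a) \cap \{v \spst v <_\sigma a\}$; the vertices strictly preceding $a$ in $\sigma$ are exactly those of $C$, so $\pa_D(a) = \nb_G(a) \cap C = C$, as required.

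The only delicate point is the backward construction: one must be sure that a PEO of $G$ exists which orders the vertices of $C$ before $a$ and places no other vertex of $\nb_G(a)$ before $a$. Both requirements are handled simultaneously by the clique-prefix property of the \LexBFS{} implementation recalled in the paper, which is exactly why that property was highlighted there. Once this ordering is in hand, the remaining verifications (that it is a PEO, and that the induced orientation produces $\pa_D(a) = C$) are immediate from Proposition~\ref{prop:lex-bfs-peo} and the definition of orienting along $\sigma$.
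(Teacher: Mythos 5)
The paper does not actually prove this proposition---it is imported from \citet{Andersson1997Characterization} and used as a black box in the proof of Proposition~\ref{prop:opt-single-correct}---so there is no in-paper argument to compare against. Your proof is correct and uses exactly the machinery the paper sets up for this purpose: the forward direction is immediate from the definition of a perfect elimination ordering (in an orientation of $G$ along a PEO, $\pa_D(a)$ is the set of $\sigma$-earlier neighbors of $a$, hence a clique), and the backward direction correctly combines the clique-prefix property of \LexBFS{} with Proposition~\ref{prop:lex-bfs-peo}, noting that $C\cup\{a\}$ is a clique so the output ordering begins with $C$ followed by $a$, which forces $\pa_D(a)=C$.
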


\begin{proof}[Proof of Proposition \ref{prop:opt-single-correct}]
  If $G$ has no undirected edges, the neighbor set of every vertex $v \in [p]$ is empty and the inner for loop in Algorithm \ref{alg:opt-single} (lines \ref{ln:opt-single-foreach-start} to \ref{ln:opt-single-foreach-end}) is never executed.  Hence $v\subscr{opt}$ is never assigned in the for-loop and \OptSingle{} returns the empty set.
  
  We assume that $G$ has at least one undirected edge from here on.  Then the inner for loop (lines \ref{ln:opt-single-foreach-start} to \ref{ln:opt-single-foreach-end}) of Algorithm \ref{alg:opt-single} is run at least once, and the algorithm will return a single vertex.
  
  In the rest of the proof, we confine our considerations to the inner for loop of Algorithm \ref{alg:opt-single} for a fix vertex $v$.  We must show that the value $\xi\subscr{opt}$ calculated in the loop fulfills
  $$
    \xi\subscr{opt} = \max_{D \in \mathbf{D}(G)} \xi\left(\mathcal{E}_{\mcI \cup \{\{v\}\}}(D)\right)\ .
  $$
  
  By the comments in the beginning of Section \ref{sec:optimal-targets}, an intervention at a vertex $v$ only improves the identifiability of edges in the chain component $T_G(v)$.  By Lemma \ref{lem:intervention-local-effect}, the chain component $T_G(v)$ can be treated as an observational essential graph; this is what Algorithm \ref{alg:opt-single} does in the inner for loop  (lines \ref{ln:opt-single-foreach-start} to \ref{ln:opt-single-foreach-end}).  The value $\xi$ calculated in line \ref{ln:opt-single-calc-xi} is therefore the number of unoriented edges in $\mathcal{E}_{\mcI \cup \{\{v\}\}}(D_1)$, where $D_1$ is a DAG we get by orienting the edges of $T_G(v)$ as in $D$ (line \ref{ln:opt-single-orient}) and the edges of the remaining chain components according to an arbitrary perfect elimination ordering.
  
  In summary, we calculate the maximum number of unorientable edges over a subset of $\mathbf{D}(G)$ in lines \ref{ln:opt-single-foreach-start} to \ref{ln:opt-single-foreach-end}, hence we have
  $$
    \xi\subscr{opt} \leq \max_{D \in \mathbf{D}(G)} \xi\left(\mathcal{E}_{\mcI \cup \{\{v\}\}}(D)\right)
  $$
  at the end of the inner for loop.
  
  We claim that the converse also holds:
  \begin{equation}
    \xi\subscr{opt} \geq \max_{D \in \mathbf{D}(G)} \xi\left(\mathcal{E}_{\mcI \cup \{\{v\}\}}(D)\right)\ .
    \label{eqn:maximum-bound}
  \end{equation}
  Assume that $D_2 \in \mathbf{D}(G)$ is a maximizer of the right hand side of (\ref{eqn:maximum-bound}).  Let $C := \{a \in \nb_G(v) \spst a \grarright v \in D_2\}$, and define $\sigma := \LexBFS((C, v, \ldots), E[T_G(v)])$.  Note that $C$ is a clique by Proposition \ref{prop:lex-bfs-clique}; hence the \LexBFS-ordering $\sigma$ is of the form $\sigma = (C, v, \ldots)$ (see Section \ref{sec:background}).  Furthermore, let $D_1$ be the DAG we get by orienting the edges of $G[T_G(v)]$ according to the topological ordering $\sigma$, and the edges of the remaining chain components according to an arbitrary perfect elimination ordering.  \ref{lem:local-characterization}; especially, their $\mcI'$-essential graphs coincide for the family of targets $\mcI' = \mcI \cup \{\{v\}\}$.  Hence we have
  $$
    \xi\left(\mathcal{E}_{\mcI \cup \{\{v\}\}}(D_1)\right) = \xi\left(\mathcal{E}_{\mcI \cup \{\{v\}\}}(D_2)\right)\ .
  $$
  Since $D_1$ is visited by Algorithm \ref{alg:opt-single}, this means that
  $$
    \xi\subscr{opt} \geq \max_{D \in \mathbf{D}(G)} \xi\left(\mathcal{E}_{\mcI \cup \{\{v\}\}}(D)\right)\ ,
  $$
  which proves the claim.
\end{proof}

We finally proceed to the proof of the main theoretic result of the paper.  Note that the proof of Proposition \ref{prop:optimal-unbounded-target} is constructive and also proves the correctness of Algorithm \ref{alg:opt-unb}.

\begin{proof}[Proof of Proposition \ref{prop:optimal-unbounded-target}]
  \begin{subprop}
    \item
      Since $G$ is chordal, we have $\chi(G) = \omega(G)$ by Corollary \ref{cor:chordal-graph-perfect-coloring}(ii) and Proposition \ref{prop:perfectly-orderable-graphs-perfect}.  Let $c: V \to [\omega(G)]$ be a proper coloring of $G$.  Define $I := c^{-1}([h])$ for $h := \lceil \omega(G)/2 \rceil$.  With an intervention at the target $I$, at most the edges of $G[I]$ and $G[V \setminus I]$ are unorientable for any causal structure $D \in \mathbf{D}(G)$ under the family of targets $\mcI := \{\emptyset, I\}$.  Therefore the bound
      $$
        \omega(\mathcal{E_I}(D)) \leq \max\{\omega(G[I]), \omega(G[V \setminus I])\}
      $$
      holds for every $D \in \mathbf{D}(G)$.  It remains to show that both of these terms are bounded by $h$.
      
      The induced subgraph $G[I]$ is also perfect, and $c|_I$ is a proper $h$-coloring of $G[I]$.  Hence we have $\omega(G[I]) = \chi(G[I]) \leq h$.  Analogously, $c|_{V \setminus I}$ is a proper $(\omega(G) - h)$-coloring of $G[V \setminus I]$, hence we have $\omega(G[V \setminus I]) = \chi(G[V \setminus I]) \leq \omega(G) - h \leq h$ by definition of $h$.
    
    \item
      Let $C$ be a maximum clique in $G$, and define $C \cap I =: \{v_1, \ldots, v_k\}$ and $C \setminus I =: \{v_{k+1}, \ldots, v_\omega\}$.  The \LexBFS-ordering
      $$
        \sigma := \LexBFS((v_1, \ldots, v_\omega, \ldots), E)
      $$
      starts with the vertices $v_1, \ldots, v_\omega$.  Set $\mcI := \{\emptyset, I\}$ and let $D \in \mathbf{D}(G)$ be oriented according to $\sigma$.
      
      We claim that the arrows in $D[C \cap I]$ and in $D[C \setminus I]$ are not \mcI-essential in $D$.  For $v_i, v_j \in C \cap I$ ($i < j$), consider the ordering
      $$
        \sigma' := \LexBFS((v_1, \ldots, v_j, \ldots, v_i, \ldots, v_\omega, \ldots), E)\ ,
      $$
      and $D' \in \mathbf{D}(G)$ which is obtained by orienting the edges of $G$ according to $\sigma'$.  We then have $D' \sim_\mcI D$:
      \begin{itemize}[nolistsep]
        \item $D$ and $D'$ obviously have the same skeleton, and both have no v-structures.
        
        \item $D^{(I)}$ and $D'^{(I)}$ have the same skeleton because all arrows between a vertex $a \in I$ and another one $b \notin I$ point away from $a$.
      \end{itemize}
      For $v_i, v_j \in C \setminus I$, the argument is analogous, which proves the claim.
      
      $\mathcal{E_I}(D)$ contains the cliques $C \cap I$ and $C \setminus I$ of size $k$ and $\omega(G) - k$ though.  The fact that $\max\{k, \omega(G) - k\} \geq \lceil \omega(G) / 2 \rceil$ completes the proof.
  \end{subprop}
\end{proof}

\subsection*{References}

\bibliography{dag-active-learning}

\end{document}